\documentclass[pra,onecolumn]{revtex4}
\usepackage{amsthm}
\usepackage{amsmath}
\usepackage{mathtools}
\usepackage{latexsym}
\usepackage{amsfonts}
\usepackage{amssymb}
\usepackage{color}
\usepackage{bbm,dsfont}
\usepackage{graphicx}
\usepackage{hyperref}
\usepackage{cleveref}
\usepackage{enumerate}
\usepackage[caption=false,subrefformat=parens,labelformat=parens,position=top]{subfig}
\usepackage{verbatim}
\usepackage{tcolorbox}

\captionsetup[subfloat]{justification=raggedright,singlelinecheck=false}
 
 

\newtheorem{proposition}{Proposition}
\newtheorem{proposition?}{Proposition?}
\newtheorem{theorem}{Theorem}
\newtheorem{lemma}{Lemma}

\theoremstyle{definition}




\newcommand{\C}{\mathbb C} 
\newcommand{\half}{\frac{1}{2}} 
\newcommand{\abs}[1]{\left| #1 \right|} 

\newcommand{\id}{I} 
\newcommand{\hi}{\mathcal{H}} 
\newcommand{\lh}{\mathcal{L(H)}} 
\newcommand{\trh}{\mathcal{T(H)}} 
\newcommand{\sh}{\mathcal{S(H)}} 
\newcommand{\eh}{\mathcal{E(H)}} 
\newcommand{\ip}[2]{\left\langle #1 | #2\right\rangle} 
\newcommand{\kb}[2]{|#1\rangle\!\langle#2|} 
\newcommand{\proj}[1]{\kb{#1}{#1}}
\newcommand{\tr}[2][]{\mathrm{tr}_{#1}\!\left[#2\right]} 
\newcommand{\comm}[2]{\left[#1,#2\right]} 



\newcommand{\Ao}{\mathsf{A}}
\newcommand{\Bo}{\mathsf{B}}
\newcommand{\Do}{\mathsf{D}}
\newcommand{\To}{\mathsf{T}}
\newcommand{\Xo}{\mathsf{X}}

\newcommand{\In}{\mathcal{I}} 

\newcommand{\bx}{\boldsymbol{x}}


\begin{document}
\title{Quantum State Discrimination via Repeated Measurements and the Rule of Three}

\author{Tom Bullock}

\email{tombul@utu.fi}
\affiliation{QTF Centre of Excellence, Turku Centre for Quantum Physics, Department of Physics and Astronomy,
University of Turku, FI-20014 Turun yliopisto, Finland
}

\author{Teiko Heinosaari}
\affiliation{QTF Centre of Excellence, Turku Centre for Quantum Physics, Department of Physics and Astronomy,
University of Turku, FI-20014 Turun yliopisto, Finland
}

\begin{abstract}
    The task of state discrimination for a set of mutually orthogonal pure states is trivial if one has access to the corresponding sharp (projection-valued) measurement, but what if we are restricted to an unsharp measurement?
    Given that any realistic measurement device will be subject to some noise, such a problem is worth considering.
    In this paper we consider minimum error state discrimination for mutually orthogonal states with a noisy measurement.
    We  show that by considering repetitions of commutative L\"uders measurements on the same system we are able to increase the probability of successfully distinguishing states.
    In the case of binary L\"uders measurements we provide a full characterisation of the success probabilities for any number of repetitions.
    This leads us to identify a `rule of three', where no change in probability is obtained from a second measurement but there is noticeable improvement after a third.
    We also provide partial results for $N$-valued commutative measurements where the rule of three remains, but the general pattern present in binary measurements is no longer satisfied.
\end{abstract}

\maketitle

\section{Introduction}\label{sec:int}

The task of discriminating states via `one-shot' measurements is fundamental to a number of processes in many scenarios: classically, the solution of a calculation performed on a computer cannot be read, nor can a private encryption key be shared, unless we are able to correctly identify the resulting bits.
This remains true within the quantum realm, where any quantum process is only ever as good as our ability to distinguish the possible states the system may be in.
However, due to the nature of states within quantum mechanics, we are subject to difficulties not seen classically: whereas a single measurement is generally sufficient to fully distinguish between a set of possible pure states in a classic system, for a collection of pure quantum states it is impossible to distinguish them in such a setting, with the exception of a set of mutually orthogonal states.
As such, the aim is to maximise our chances of distinguishing the states via clever choices of measurement. 

As is the case with a number of notions within quantum theory, there is no clear-cut `best choice' for what one wishes to optimise over in a state discrimination task, and by extension no obvious choice of measurement to perform in any given circumstance.
In the 1970's Helstrom \cite{QDET76}, Holevo \cite{Holevo73,PSAQT82}, and Yuen \emph{et al.} \cite{YuKeLa75} began work on minimum error state discrimination, which aimed to minimise the average error in determining each possible state in a set via their corresponding measurement outcome.
In the late 1980's Ivanonic \cite{Ivanovic87} (followed by Dieks \cite{Dieks88} and Peres \cite{Peres88}) considered so-called unambiguous state discrimination, where the goal was no longer to minimise error, but rather to maximise the chance of unambiguously distinguishing states where possible. 
More recently an additional figure of merit referred to as ``maximum confidence'' was introduced by Kosut \emph{et al.} \cite{KoWaElRa04} and Croke \emph{et al.} \cite{CrAnBaGiJe06} (who coined the term), which refers to how confident we may be that we started with a particular state, given that we obtained its corresponding measurement outcome.
Each of these figures of merit come with shortcomings: in minimum error state discrimination we will still find some instances of incorrect measurement outcomes for a given state, whereas the latter two admit inconclusive measurement results where no state can be inferred.
As such, one generally decides a particular figure of merit based on their preferred trade-off.
For more in-depth summaries of state discrimination tasks and their applications we refer the reader to \cite{BaCr09,Bergou10,BaKw15}.

It is well known that for distinguishing a set of mutually orthogonal pure states (with regard to any figure of merit) the obvious choice is a sharp observable containing the states in question as projections.
However, in this work we consider minimum error state discrimination in the case where one is unable to use such a sharp observable, and instead some unsharp variant is available.
This construct is in contrast to the standard problem presented, where a set of non-orthogonal states are distinguished, but considering our scenario is physically justified: unsharp measurements can allow for the possibility of sequential or repeated measurements without complete loss of the initial state, and furthermore encapsulate the reality that any measurement will contain some inherent noise, be it the result of mechanical or human error.
In fact, it is impossible to perform projective  measurements on quantum systems  using  finite resources \cite{GuFrHu20}.
While this may seem to be a hindrance, we shall see that the aforementioned admission of repeated measurements works in our favour.

In \cite{HaHeKu16} it was shown that in the limit of number of repetitions going to infinity, the information content of an unsharp binary observable tends to that of its spectral counterpart, but no consideration was given to intermediate steps. 
This is the point of interest for this paper, as it is not only a more realistic scenario, but allows us to see where the trade-off point lies where an increased number of measurements provides diminishing returns in terms of increased success probability.
In a realistic scenario one would aim to perform a finite number of measurement rounds and in doing so reach a sufficiently high confidence about the initial state.

The use of repeated measurements in state discrimination tasks have previously been adopted in the context of unambiguous state discrimination \cite{BeFeHi13,PaZhXuLiCh13,FiHaHiBe20}, and for minimum error discrimination \cite{RoPaMaGi17,CrBaWe17}.
However, in the studied scenarios the observables measured differ over the course of the measurement process (with, for example \cite{RoPaMaGi17} relying on sequential binary measurements to approximate an $N$-valued observable as described in \cite{AnOi08}).
In another example \cite{BeFeHi13} there are multiple agents aiming to gather the same information, solely based on their individually obtained measurement outcomes.
In this paper we instead restrict ourselves to performing repetitions of the same $N$-valued observable on the system, thereby receiving a string of outcomes from which we can post-process the data to determine the most likely state among $N$ alternatives. 
Mathematically, this translates to constructing a joint observable whose effects we partition to form a new $N$-valued observable, which is then used to calculate the success probability.

We begin this paper with an overview of the necessary concepts in Section \ref{sec:prelim}; these contain the mathematical descriptions of finite outcome observables and repeated measurements, as well as expressing the problem of minimum error state discrimination and post-processing of observables.
We motivate our main work in Section \ref{sec:motiv}, where we explicitly calculate the success probability for distinguishing two eigenvectors of the $x$-direction operator $\sigma_x$ by repeated measurements of the unsharp spin-$x$ observable $\Xo_t$. 
This example introduces us to the ``rule of three'' that will be prevalent in the cases considered later: the second measurement does not lead to an increase in the success probability, but the third does.
The case of binary observables is considered in Section \ref{sec:main} for an arbitrary number of repetitions, and we find a general pattern whereby the success probability only increases for odd numbers of repetitions.
We extend the results of Section \ref{sec:main} by providing some partial results for commutative $N$-valued observables ($N>2$) in Section \ref{sec:higher}, where the rule of three is found to still hold, though the general pattern for binary observables is no longer seen to hold.
Finally, we discuss our results in Section \ref{sec:conc}.

\section{General framework for state discrimination via repeated measurements }\label{sec:prelim}

\subsection{Quantum measurements}\label{subsec:seqmeas}

In this section we recall the basic formalism of quantum measurements.
(For a more detailed discussion see, for example \cite{QM16}.)
We shall generally consider an arbitrary separable Hilbert space $\hi$ of (possibly infinite) dimension $d$ and its space $\lh$ of bounded linear operators.
The operators we will primarily work with are the so-called effects on $\hi$: $\eh = \{E\in\lh\ |\ 0\leq E \leq I\}$, where the partial order is in terms of the expectation values of vectors of $\hi$, i.e., $E\leq F$ means that $\ip{\psi}{E\psi}\leq \ip{\psi}{F\psi}$ for all $\psi\in\hi$.
The state space of $\hi$, denoted by $\sh$, consists of positive operators of unit trace: $\sh= \{\rho\in\trh\ |\ \rho\geq0\ \mathrm{and}\ \tr{\rho}=1 \}$ where $\trh\subset \lh$ denotes the set of trace-class operators on $\hi$.

Quantum observables are mathematically described by positive operator-valued measures (POVMs), and we will restrict ourselves to $N$-valued observables, where $N$ is a finite integer.
An $N$-valued POVM $\Ao$ is a map from the finite outcome set $\Omega_\Ao=\{1,2,\dots,N\}$ to the set of effects; specifically, $\Ao:x \mapsto \Ao(x)\in\eh$ with $\sum_x \Ao(x)=\id$.
For a system prepared in a state $\rho$, the probability of obtaining the outcome $x$ when measuring observable $\Ao$ is given by the Born rule $p_\rho^\Ao(x) =\tr{\rho \Ao(x)}$.

An observable describes the measurement outcome statistics, but leaves open how the quantum state is transformed during the measurement process; this task is accomplished by an instrument, a mathematical object that describes both.
For a finite set $\Omega=\{1,\dots,N\}$, an instrument $\In$ is a map from $\Omega$ to the set of operations on $\trh$; that is, $\In:x \mapsto \In_{x}$ where 
\begin{enumerate}[(i)]
    \item $\In_{x}:\trh \rightarrow \trh$ is completely positive and linear for all $x\in\Omega$;
    \item $\In_{x}$ is trace non-increasing for all $x\in\Omega$;
    \item $\sum_x \In_{x}$ is trace preserving.
\end{enumerate}
The dual instrument $\In^*:\lh \rightarrow \lh$ defines a unique observable $\Ao_\In:\Omega\rightarrow \eh$ via $\Ao_\In (x) = \In^*_{x}(I)$ and thus $\tr{\In_{x}(\rho)}= p_\rho^{\Ao_\In}(x)$.
For any observable $\Ao$ there exist infinitely many instruments $\In$ such that $\Ao_\In=\Ao$, each describing some particular way of measuring it.
If the observable $\Ao$ is measured in a way described by the instrument $\In$ on a system in state $\rho$ and outcome $x$ is recorded, then the unnormalised state of the system after the measurement will be given by $\In_{x}(\rho)$.

\begin{figure}[t]
    \centering
    \includegraphics[width=0.7\textwidth]{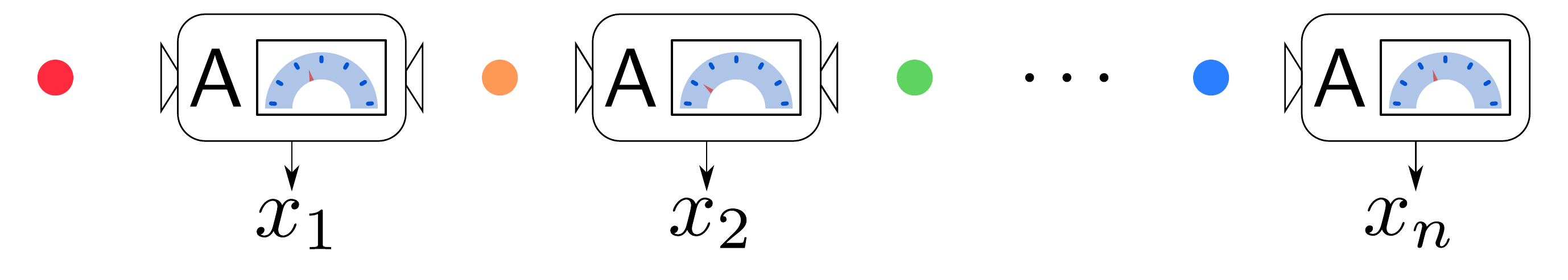}
    \caption{\label{fig:repeat} We consider the situation where we subject the system of interest to a repeated measurement of an observable $\Ao$ $n$ times. 
    After each measurement we record the result $x_i$ and the system experiences a state change $\rho\mapsto \In_{x_i}(\rho)$.
    Once each measurement is performed we are left with an $n$-tuple $(x_1,\dots,x_n)$ of measurement results, which can be considered the results of an $n$-round observable $\Ao^{(n)}_{\In}$.}
\end{figure}

\subsection{Repeated quantum measurements}\label{subsec:repmeas}

In our investigation, we restrict ourselves to situations where the same measurement apparatus is used repeatedly on the same system; see Figure \ref{fig:repeat}.
Suppose that $\In$ is the instrument describing the measurement of an observable $\Ao$.
Then, after $n$ repeated measurements we get an outcome array $(x_1,\ldots,x_n)\in\Omega_\Ao^n$ with probability
\begin{align}
\tr{\rho\Ao^{(n)}_{\In}(x_1,\ldots,x_n)}\coloneqq\tr{(\In_{x_n}\circ \cdots \circ \In_{x_1})(\rho)} \, ,
\end{align}
where $\In_{x_n}\circ \cdots \circ \In_{x_1}$ is the functional composition of the corresponding operations.
This equation, assumed to be valid for all input states $\rho$, determines the observable $\Ao^{(n)}_{\In}$, which we shall refer to as an $n$-round observable. 
The mathematical structure of $\Ao^{(n)}_{\In}$ depends crucially on the specific form of $\In$.
For instance, if $\In$ is of the measure-and-prepare form, then no subsequent measurement can give anything more than was already found in the first measurement. 
Further examples of repeated measurements that do not provide new information after the first, or some finite number of, repetitions are given in \cite{HaHeKu16}.

In the current work we limit ourselves to L\"uders instruments, i.e., we choose $\In_{x}(\rho) = \sqrt{\Ao(x)}\rho \sqrt{\Ao(x)}$.
For ease of notation we denote the corresponding $n$-round observable $\Ao^{(n)}_\In$ simply by $\Ao^{(n)}$ (since no other type of instrument will be considered in what follows).
Hence,
\begin{align}
\Ao^{(n)}(x_1,\ldots,x_n)=\sqrt{\Ao(x_1)}\cdots\sqrt{\Ao(x_{n-1})}\Ao(x_n)\sqrt{\Ao(x_{n-1})}\cdots\sqrt{\Ao(x_1)} \, .
\end{align}
If $\Ao$ is commutative (i.e. $\comm{\Ao(x)}{\Ao(y)}=0$ for all $x,y$), then the previous expression reduces to
\begin{equation}\label{eq:commrepmeas}
 \Ao^{(n)}(x_1,\ldots,x_n)=\Ao(x_1)\Ao(x_2)\cdots\Ao(x_n) \, .
\end{equation}
We observe that in this case the probability of getting a particular outcome array $(x_1,\ldots,x_n)$ in a state $\rho$ does not depend on the order of the outcomes.

\subsection{State discrimination in $n$ rounds}\label{subsec:statedisc}

We are considering a particular quantum information task, namely, the discrimination of states. 
We assume that the initial observable $\Ao$ with outcome set $\Omega_\Ao=\{1,\dots,N\}$ discriminates $N$ states $\{\rho_1, \dots,\rho_N\}$ with some error that is larger than in the optimal discrimination of these states.
The observable $\Ao$ can be, for instance, a noisy version of the observable that optimally discriminates the $N$ states.
The success probability $P_\mathrm{succ}^{(1)}$ for successfully distinguishing the states with $\Ao$ is given by
\begin{equation}
    P_\mathrm{succ}^{(1)}=\frac{1}{N}\sum_{j=1}^N \tr{\rho_j \Ao(j)},
\end{equation}
where we have assumed uniform a priori distribution of the states.

By repeating the measurement we hope to increase the probability of guessing the correct state.
After $n$ L\"uders measurements of $\Ao$, resulting in the observable $\Ao^{(n)}$ with outcome set $\Omega_\Ao^n$, we possess statistics for $N^n$ possible strings of measurement outcomes.
In order to assess the ability of $\Ao^{(n)}$ in distinguishing the original $N$ states, we need to post-process $\Ao^{(n)}$ such that we are left with an $N$-valued observable denoted by $\Bo^{(n)}$ with outcome set $\Omega_\Ao$.
This means that for each outcome array $(x_1,\ldots,x_n)$, we need to decide the most likely state $\rho_j$ and hence relabel this outcome array into $j$. 

Mathematically, the post-processing is performed by a Markov kernel $w:\Omega_\Ao \times \Omega_\Ao^n \rightarrow [0,1]$ satisfying $w(j, \boldsymbol{x}) \eqqcolon w^{j}_{\boldsymbol{x}} \geq 0$ for all $j\in\Omega_\Ao$ and $\boldsymbol{x}\in\Omega_\Ao^n$, and $\sum_{j} w^{j}_{\boldsymbol{x}} =1$ for all $\boldsymbol{x}\in\Omega_\Ao^n$.
The numerical value $w(j, \boldsymbol{x})$ is the probability that $\bx$ is relabeled into $j$. 
If each $\bx$ determines a unique $j$ (i.e., $w(j, \boldsymbol{x})\in \{0,1\}$ for all $j\in\Omega_\Ao$ and $\boldsymbol{x}\in\Omega_\Ao^n$), then we say that $w$ is deterministic.
The post-processed observable is given by 
\begin{equation}
    \Bo^{(n)}(j) = \sum_{\boldsymbol{x}\in\Omega_\Ao^{(n)}} w(j,\boldsymbol{x}) \Ao^{(n)}(\boldsymbol{x}),
\end{equation}
from which we arrive at the $n$-round success probability $P_\mathrm{succ}^{(n)}$:
\begin{equation}
    P_\mathrm{succ}^{(n)} = \frac{1}{N} \sum_{j=1}^N \tr{\rho_j\Bo^{(n)}(j)}.
\end{equation}
This expression clearly depends on the chosen post-processing.
As we want to maximize $P_\mathrm{succ}^{(n)}$, we can restrict to deterministic Markov kernels; other Markov kernels are their convex mixtures \cite{Davis61}.
The choice of the optimal post-processing will be studied in Sections. \ref{sec:main} and \ref{sec:higher}.

In performing the previously described method of state discrimination via repeated measurements, we may come across outcome arrays that do not suggest to us one particular state over another.
Let $\mathcal{S}'\subseteq\sh$ be a subset of states. 
We say that a measurement outcome array $\boldsymbol{x} \in \Omega_\Ao^n$ is  \emph{ambiguous} with respect to $\mathcal{S}'$ if $\tr{\rho\Ao^{(n)}(\boldsymbol{{x}})}=\text{const.}$ for all $\rho\in \mathcal{S}'$.
Two illustrative examples of this notion in the case of $n=1$ are the following. 
Firstly, if $\mathcal{S}'$ is the whole state space $\sh$, then a trivial observable $\To: x\mapsto p_x I$ only possesses outcomes that are ambiguous with respect to $\mathcal{S}'$. 
Secondly, when $\mathcal{S}'$ is equal to an orthonormal basis $\{\varphi_i\}$, then any sharp observable in a basis mutually unbiased to $\{\varphi_i\}$ will only possess outcomes that are ambiguous with respect to the states.

We remark that the previous framework for state discrimination via repeated measurements would work similarly had we chosen another instrument $\In$ than the L\"uders instrument.
The forms of $\Bo^{(n)}$ and $P_\mathrm{succ}^{(n)}$ obviously depend on the chosen instrument and, as said before, we do not mark the instrument simply because we stick to L\"uders instruments.

\subsection{Noisy measurement in state discrimination}

The general framework discussed in earlier subsections is applicable for any $\Ao$ and $\In$.
In this section we consider a more specific setting that will be relevant for what follows.
We begin with a collection of $N$ states $\{\rho_1, \dots,\rho_N\}$ that are perfectly distinguishable, i.e., there exists an $N$-outcome observable $\Do$ such that $\tr{\rho_i \Do(j)}=\delta_{ij}$.
This is the case if and only if they are orthogonal pure states or, more generally, mixed states with orthogonal supports.
(Note that if $N<d$, then $\Do$ is not unique.)
However, we assume that such $\Do$ is not available, and instead we use a noisy observable $\Ao$ to distinguish the states.
We assume that $\Ao$ is still reasonably good in distinguishing the states $\{\rho_1, \dots,\rho_N\}$, which we take to mean that $\tr{\rho_j\Ao(j)} \geq \tr{\rho\Ao(j)}$ and $\tr{\rho_i\Ao(j)} \leq \tr{\rho\Ao(j)}$ if $i\neq j$ for all states $\rho$.
We further make a simplifying uniformity assumption that $\tr{\rho_j\Ao(j)}$ is the same for all $j$ and similarly $\tr{\rho_i\Ao(j)}$ is the same for all $i,j$.

\begin{proposition}\label{prop:eigen}
Let
\begin{equation}\label{eq:max}
\forall j: \quad \sup_{\rho} \tr{\rho\Ao(j)} = \tr{\rho_j\Ao(j)}=\lambda
\end{equation}
and
\begin{equation}\label{eq:min}
\forall i \neq j: \quad \inf_{\rho} \tr{\rho\Ao(j)} = \tr{\rho_i\Ao(j)}=\mu \, .
\end{equation}
Then
\begin{equation}\label{eq:uniformeig-0}
        \Ao(j)\rho_i  = \begin{cases}
            \lambda\ \rho_i, \quad &i=j \, , \\
            \frac{1-\lambda}{N-1}\ \rho_i, \quad &i\neq j \, ,
        \end{cases}
    \end{equation}
    and $\lambda \geq \tfrac{1}{N}$.
\end{proposition}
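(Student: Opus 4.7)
The plan is to read off spectral information about each effect $\Ao(j)$ from the attained supremum and infimum conditions, and then to exploit the normalization $\sum_j \Ao(j) = \id$ to tie everything together. The key fact I will use is standard: for any effect $E$ on $\hi$, the supremum of $\rho \mapsto \tr{\rho E}$ over states equals the largest point of the spectrum of $E$, and a state $\rho$ attains this supremum if and only if its support lies in the corresponding (top) eigenspace; the analogous statement holds for the infimum and the smallest spectral point.

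First, I would apply this observation to \eqref{eq:max}. Since the supremum is attained at $\rho_j$ with value $\lambda$, the number $\lambda$ must be a genuine eigenvalue of $\Ao(j)$, and the range of $\rho_j$ must sit inside $\ker(\Ao(j) - \lambda \id)$. Therefore $\Ao(j)\rho_j \psi = \lambda \rho_j \psi$ for every $\psi \in \hi$, which gives $\Ao(j)\rho_j = \lambda \rho_j$. The exact same reasoning applied to \eqref{eq:min} shows that for $i \neq j$ the state $\rho_i$ is supported on the minimum-eigenvalue subspace of $\Ao(j)$, yielding $\Ao(j) \rho_i = \mu \rho_i$.

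Next, to eliminate $\mu$ in favor of $\lambda$, I would fix $i$ and act with the completeness relation on $\rho_i$:
\begin{equation*}
\rho_i = \id\, \rho_i = \sum_{j=1}^N \Ao(j)\rho_i = \lambda \rho_i + (N-1)\mu\, \rho_i ,
\end{equation*}
so $\lambda + (N-1)\mu = 1$ and hence $\mu = (1-\lambda)/(N-1)$. Substituting this value back into the two eigenvalue equations produces exactly \eqref{eq:uniformeig-0}. For the bound $\lambda \geq 1/N$ I would use that $\lambda$ is the maximum and $\mu$ the minimum point of the spectrum of the same operator $\Ao(j)$, so $\lambda \geq \mu$; combined with $\lambda + (N-1)\mu = 1$ this gives $N\lambda \geq 1$.

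The only point requiring care is the spectral claim underlying the first two steps, namely that an \emph{attained} extremum of $\tr{\rho E}$ forces the maximizer's (resp.\ minimizer's) support into a specific eigenspace. One must note that the supremum need not in general be an eigenvalue in infinite dimensions, but the hypothesis that a particular density operator $\rho_j$ realizes it automatically upgrades $\lambda$ to an honest eigenvalue — otherwise no normalized state could saturate the bound. Beyond this observation the argument is elementary.
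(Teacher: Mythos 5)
Your proof is correct and follows essentially the same route as the paper's: both deduce from the attained supremum/infimum that $\rho_j$ (resp.\ $\rho_i$) is supported in the top (resp.\ bottom) eigenspace of $\Ao(j)$, and then use the normalisation $\sum_j \Ao(j)=\id$ to obtain $\lambda+(N-1)\mu=1$. The only cosmetic difference is that you apply the completeness relation as an operator identity on $\rho_i$ while the paper takes the trace, and you make explicit the bound $\lambda\geq\mu\Rightarrow\lambda\geq 1/N$, which the paper leaves implicit.
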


\begin{proof}
It follows from \eqref{eq:max} and \eqref{eq:min} that $\lambda$ and $\mu$ are the maximal and minimal eigenvalues of $\Ao(j)$, respectively.
Then, as $\lambda$ is the maximal eigenvalue, a unit vector $\psi\in\hi$ satisfies $\ip{\psi}{\Ao(j)\psi}=\lambda$ only if $\Ao(j)\psi = \lambda \psi$.
By using the spectral decompsition of $\rho_j$ we then conclude that $\Ao(j)\rho_j = \lambda \rho_j$.
Analogous reasoning shows that $\Ao(j)\rho_i = \mu \rho_i$ for $i\neq j$.
Finally, 
\begin{equation*}
1 = \tr{\rho_i \id }=\sum_{j}\tr{\rho_i\Ao(j)} = \lambda + (N-1) \mu \, ,
\end{equation*}
from which \eqref{eq:uniformeig-0} follows.
\end{proof}

We take the setting of Prop. \ref{prop:eigen} as our starting point in the following investigations.
We remark that \eqref{eq:uniformeig-0} does not determine $\Ao$ uniquely unless $N=d$.
For example, if $\Do$ is any observable that perfectly discriminates the states, then the observable $\Ao$ given by
\begin{equation}
\Ao(j) = (\lambda - \mu )\Do(j) + \mu \id \, ,
\end{equation}
where $\mu = \frac{1-\lambda}{N-1}$, satisfies condition \eqref{eq:uniformeig-0}.

\section{Motivating qubit example}\label{sec:motiv}

In order to motivate our main result, we first provide an explicit example. 
Consider the qubit system $\C^2$ and suppose that we wish to distinguish between the two eigenstates of the $\sigma_x$ operator $P_\pm = \proj{\pm}$.
We assume that we must attempt to do so via an unsharp unbiased spin-$x$ measurement that is parametrised by $t\in[0,1]$; i.e., our observable, denoted $\Xo_t$, is given by 
\begin{equation}
\Xo_t(\pm) = \frac{I\pm t \sigma_x}{2} = \frac{1\pm t}{2} P_+ + \frac{1\mp t}{2} P_- \, .
\end{equation}
The eigenvalues of these effects are $\lambda_\pm = (1\pm t)/2$, where $\lambda_+ \geq \lambda_-$ and $\lambda_+ + \lambda_- =1$.
The success probability of distinguishing between $P_+$ and $P_-$ with $\Xo_t$ is 
\begin{equation}
    P^{(1)}_\mathrm{succ} = \frac{1}{2}\tr{P_+\Xo_t(+) + P_-\Xo_t(-)}=\lambda_+ \, .
\end{equation}

Performing the L\"uders measurement of the observable a second time leads to the sequential observable $\Xo^{(2)}_t$, with effects
\begin{equation}\label{eq:Xt2eff}
    \begin{split}
        \Xo^{(2)}_t(+,+) &= \Xo_t(+)\Xo_t(+)=\lambda_+^2 P_+ + \lambda_-^2 P_- \, ,\\
        \Xo^{(2)}_t(-,-) &= \Xo_t(-)\Xo_t(-)=\lambda_-^2 P_+ + \lambda_+^2 P_- \, ,\\
        \Xo^{(2)}_t(+,-) &= \Xo^{(2)}_t(-,+) = \Xo_t(-)\Xo_t(-)=\lambda_+\lambda_- I \, .\\
    \end{split}
\end{equation}
The first two effects can be seen to be confirmatory in nature, since the first and second measurements outcomes are in agreement, whereas the last two are ambiguous as they result in the same values for any state.
In order to assess how capable $\Xo^{(2)}_t$ is of distinguishing $P_\pm$ we must post-process the observable to create a new binary observable.
Taking the most general post-processing possible, we let $\Bo^{(2)}(+)=\sum_{i,j} w^+_{ij} \Xo^{(2)}_t (i,j)$, with weights $0\leq w^+_{ij}\leq 1$, denote the first effect of this new observable that we consider to be the ``+'' outcome.
The second effect is then $\Bo^{(2)}(-)=\id - \Bo^{(2)}(+)$.
The success probability at this stage, denoted by $P_\mathrm{succ}^{(2)}$, is given by
\begin{equation}\label{eq:qubsucc2}
    P_\mathrm{succ}^{(2)} = \frac{1}{2} \tr{P_+ \Bo^{(2)}(+)  + P_- \Bo^{(2)}(-)} = \frac{1}{2}(1+2\ \tr{P_+\Bo^{(2)}(+)} -\tr{\Bo^{(2)}(+)}),
\end{equation}
where in the last equality we have relied on $I=P_+ + P_-$.
Using the forms in Equation \eqref{eq:Xt2eff} we find that the success probability is
\begin{align*}
    P_\mathrm{succ}^{(2)} &= \frac{1}{2}\big[1 + 2 (w^{+}_{++}\lambda_+^2 +w^{+}_{--}\lambda_-^2 +(w^{+}_{+-}+w^{+}_{-+})\lambda_+\lambda_-) -(w^{+}_{++} + w^{+}_{--})(\lambda^2_+ + \lambda^2_-)- 2(w^{+}_{+-}+w^{+}_{-+})\lambda_+\lambda_-\big]\\
    &=\frac{1}{2}(1 + (w^{+}_{++}-w^{+}_{--})(\lambda_+^2 - \lambda_-^2)).
\end{align*}
Note that, as expected, neither $w^{+}_{+-}$ nor $w^{+}_{-+}$ contribute to the success probability, as ambiguous results should not be able to help us draw a conclusion about which state was measured.
Since the eigenvalues $\lambda_\pm$ sum to one, we can rewrite $\lambda_+^2 - \lambda_-^2 = 2\lambda_+ -1$.
Furthermore, since the weights are non-negative, we have $w^{+}_{++}-w^{+}_{--}\leq w^{+}_{++}\leq 1$, and so the optimal success probability arises when $w^{+}_{++}=1$ and $w^{+}_{--}=0$, which simply leads to $P_\mathrm{succ}^{(2)}=\lambda_+=P_\mathrm{succ}^{(1)}$.
In other words, performing a second measurement of $\Xo_t$ does not improve our likelihood of distinguishing between $P_+$ and $P_-$.

However, if we repeat the measurement another time then we will notice an improvement. 
The observable $\Xo^{(3)}_t$ is given by
\begin{equation}\label{eq:Xt3eff}
    \begin{split}
        \Xo_t^{(3)}(+,+,+) &= \lambda_+^3 P_+ + \lambda_-^3 P_- \, ,\\
        \Xo_t^{(3)}(+,+,-) = \Xo_t^{(3)}(+,-,+)  &= \Xo_t^{(3)}(-,+,+)= \lambda_+\lambda_-\Xo_t(+) \, , \\
        \Xo_t^{(3)}(-,+,-) = \Xo_t^{(3)}(-,+,-)  &= \Xo_t^{(3)}(+,-,-)= \lambda_+\lambda_-\Xo_t(-) \, , \\
        \Xo_t^{(3)}(-,-,-) &= \lambda_-^3 P_+ + \lambda_+^3 P_- \, .
    \end{split}
\end{equation}

With a similar calculation to before we arrive at the success probability $P_\mathrm{succ}^{(3)}=3\lambda_+^2 - 2\lambda_+^3$, which is strictly greater than $P_\mathrm{succ}^{(1)}$ for $0<t< 1$.
We have therefore seen that whilst performing the unsharp measurement twice will not provide us with an advantage in distinguishing the states, we do gain one by performing it a third time.
We should stress, however, this does not mean that the observable $\Xo^{(2)}_t$ is in general equivalent to $\Xo_t$. 
Indeed, unless $t=1$ and the observable were sharp, $\Xo^{(2)}_t$ is strictly higher in post-processing ordering \cite{MaMu90a} than $\Xo_t$ since any effect of $\Xo_t$ can be reached from post-processings of $\Xo^{(2)}_t$ but not vice versa.
It is the considered discrimination task for which $\Xo_t$ and $\Xo^{(2)}_t$ perform equally well.

\section{Binary observables}\label{sec:main}

Consider a separable (not necessarily finite) Hilbert space $\hi$ and two perfectly distinguishable states $\rho_+$ and $\rho_-$.
Let  $\Ao$ be a binary observable satisfying \eqref{eq:uniformeig-0} for these states, i.e., 
\begin{equation}
\Ao(\pm)\rho_\pm = \lambda \rho_\pm \, , \quad \Ao(\pm)\rho_\mp = (1-\lambda) \rho_\mp
\end{equation}
for some $\half \leq \lambda \leq 1$.
To simplify notation we shall adopt the convention $\Ao(+) = A$ (and hence $\Ao(-) = I-A$).
The success probability $P_\mathrm{succ}^{(1)}$ of discriminating the two states $\rho_+$ and $\rho_-$ via the observable $\Ao$ is given by
\begin{equation}
        P_\mathrm{succ}^{(1)} = \frac{1}{2}\tr{\rho_+\Ao(+) + \rho_-\Ao(-)} = \lambda \, .
\end{equation}

Since the effects $A$ and $I-A$ necessarily commute, the $n$-round observables $\Ao^{(n)}$ are of the form given in Equation \eqref{eq:commrepmeas}.
For instance, the 2-round observable $\Ao^{(2)}$ has effects
\begin{equation*}
\Ao^{(2)}(+,+)=A^2 \, , \quad \Ao^{(2)}(+,-)=\Ao^{(2)}(-,+)=A(I-A) \, , \quad \Ao^{(2)}(-,-)=(I-A)^2 \, ,
\end{equation*} 
and so forth.
Note that for the observables $\Ao^{(n)}$ the ordering of the outcomes is not reflected in the form of the effects $\Ao^{(n)}(\bx)$; instead, the only relevant fact is the total number of ``+'' or ``-'' outcomes in $\bx$. 
We can hence divide post-processing into two steps: first we group all the arrays with the same number of ``+'' outcomes, after which we study how these effects should be relabelled to form the final observable $\Bo^{(n)}$.  

Letting $p$ denote the number of ``+'' outcomes in a given $n$-length measurement array, the first step in the post-processing leads to 
\begin{equation}
    \bar{\Ao}^{(n)}(p) := \sum_{\bx \in I_{n,p}} \Ao^{(n)}(\bx) = \binom{n}{p}A^{p} (I-A)^{n-p} \, ,
\end{equation}
where $p\in \{ 0,1,\ldots,n \}$ and $I_{n,p}$ is set of all $n$-arrays containing exactly $p$  ``+'' outcomes.

In the second step we create the final binary observable $\Bo^{(n)}$, where
    \begin{align}
        \Bo^{(n)}(+)  = \sum_{p=0}^n w_p \bar{\Ao}^{(n)}(p)=\sum_{p=0}^n w_p \binom{n}{p}A^p(I-A)^{n-p},
    \end{align}
and $w_p\in \{0,1\}$ are weights determining if the outcome arrays containing $p$ ``+'' outcomes are relabelled into ``+'' or ``-''.
One might expect a `majority rule', whereby arrays with more ``+'' than ``-'' will be relabelled to ``+''. 
In the following we see that this is, indeed, the case and we further analyse the success probability.

The $n$-round success probability $P_\mathrm{succ}^{(n)}$ is 
\[
    \begin{split}
        P_\mathrm{succ}^{(n)} &= \frac{1}{2}\big(\tr{\rho_+\Bo^{(n)}(+)}+\tr{\rho_-\Bo^{(n)}(-)}\big)=\frac{1}{2}\big(1+\tr{\Bo^{(n)}(+)(\rho_+-\rho_-)}\big)\\
        &= \frac{1}{2}\left( 1 + \sum_{p=0}^n w_p \binom{n}{p}\big(\lambda^p(1-\lambda)^{n-p}-\lambda^{n-p}(1-\lambda)^p\big)\right),
    \end{split}
\]
and so in wanting to maximise the success probability we must decide the appropriate weights $w_p$.
The suitable solution for this depends on $n$, and we will therefore consider the cases for odd and even $n$ separately.
 
If we have performed an odd number of repetitions, then the sum contains an even number of terms, and so can be neatly split between $0\leq p \leq \frac{n-1}{2}$ and $\frac{n+1}{2}\leq p \leq n$.
Since $\lambda\geq1/2$, we can see that the value $\lambda^p(1-\lambda)^{n-p}-\lambda^{n-p}(1-\lambda)^p$ is negative for any value of $p$ belonging to the first half of the split, from which we conclude that its corresponding weight ought to be 0.
This means that the corresponding outcome arrays are interpreted as ``-''.
At the same time, for the second half of the split the quantity $\lambda^p(1-\lambda)^{n-p}-\lambda^{n-p}(1-\lambda)^p$ is positive and so the weight ought to be $1$.
This means that the corresponding outcome arrays are interpreted as ``+''.
Combining these pieces of information together we can conclude that in the case of an odd $n$ integer of repetitions the maximum success probability is 
\begin{equation}
    P_\mathrm{succ}^{(n)} = \frac{1}{2}\left( 1 + \sum_{p=\frac{n+1}{2}}^n \binom{n}{p} \big(\lambda^p(1-\lambda)^{n-p}-\lambda^{n-p}(1-\lambda)^p\big)\right).
\end{equation}

If we perform an even number of repetitions, then the summation for $P_\mathrm{succ}^{(n)}$ contains an odd number of terms that can be split into three groups: $0\leq p \leq \frac{n}{2}-1$, $\frac{n}{2}+1\leq p \leq n$ and $p=\frac{n}{2}$. 
By the same logic as in the odd case, $w_p=0$ for $0\leq p \leq \frac{n}{2}-1$ and $w_p = 1$ for $\frac{n}{2}+1\leq p \leq n$. 
For $p=\frac{n}{2}$ we observe that $\tr{\rho_+\bar{\Ao}^{(n)}(p)} = \tr{\rho_-\bar{\Ao}^{(n)}(p)}$, which means that the corresponding measurement outcome arrays are ambiguous with respect to $\{ \rho_+,\rho_-\}$. 
We are therefore free to use any weighting $w_{n/2}$ as it will not change the success probability, so we set $w_{n/2}=0$.
Hence, we conclude that the maximum success probability for an even $n$ is 
\begin{equation}
    P_\mathrm{succ}^{(n)} = \frac{1}{2}\left( 1 + \sum_{p=\frac{n}{2}+1}^n \binom{n}{p} \big(\lambda^p(1-\lambda)^{n-p}-\lambda^{n-p}(1-\lambda)^p\big)\right).
\end{equation}

Remarkably, according to the following theorem, an even number of repetitions provides no further improvement in success over the odd number that proceeds it:
\begin{theorem}\label{thm:binsuccprob}
    Let $n$ be an odd integer.
     The success probability of distinguishing $\rho_{+}$ and $\rho_{-}$ after $n$ repeated measurements of $\Ao$ is 
    \begin{equation}
        P_\mathrm{succ}^{(n)}=P_\mathrm{succ}^{(n+1)} = \sum_{i=\frac{n+1}{2}}^n \binom{n}{i}\binom{i-1}{\frac{n-1}{2}}(-1)^{i-\frac{n+1}{2}} \lambda^i.
    \end{equation}
\end{theorem}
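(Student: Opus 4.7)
The plan is to reduce both success probabilities to binomial tail sums, verify their equality using Pascal's identity, and then extract the claimed closed form via a change of summation order plus a standard alternating-sum identity.

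First I would convert the two expressions immediately preceding the theorem into cleaner majority-rule form. Using $1=(\lambda+(1-\lambda))^n$ together with the symmetry $k\leftrightarrow n-k$, the factor of $1$ in the odd-$n$ expression combines with the antisymmetric difference $\lambda^p(1-\lambda)^{n-p}-\lambda^{n-p}(1-\lambda)^p$ to give the binomial tail
\[
P^{(n)}_\mathrm{succ}=\sum_{k=(n+1)/2}^{n}\binom{n}{k}\lambda^k(1-\lambda)^{n-k},
\]
and the analogous manipulation for $n+1$ even yields
\[
P^{(n+1)}_\mathrm{succ}=\sum_{k=(n+3)/2}^{n+1}\binom{n+1}{k}\lambda^k(1-\lambda)^{n+1-k}+\tfrac{1}{2}\binom{n+1}{(n+1)/2}\lambda^{(n+1)/2}(1-\lambda)^{(n+1)/2},
\]
where the extra half-weighted term records the ambiguous middle outcome (its contribution to the success probability is the same regardless of how $w_{(n+1)/2}$ is chosen).

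Next I would prove $P^{(n)}_\mathrm{succ}=P^{(n+1)}_\mathrm{succ}$ by multiplying the majority-rule expression for $P^{(n)}_\mathrm{succ}$ by $1=\lambda+(1-\lambda)$, distributing, and re-indexing one of the two resulting sums via $j=k+1$. Pascal's identity $\binom{n}{k-1}+\binom{n}{k}=\binom{n+1}{k}$ collapses the bulk of the terms into $\sum_{k=(n+3)/2}^{n+1}\binom{n+1}{k}\lambda^k(1-\lambda)^{n+1-k}$, leaving behind a single boundary contribution $\binom{n}{(n-1)/2}\lambda^{(n+1)/2}(1-\lambda)^{(n+1)/2}$. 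The identity $\binom{n+1}{(n+1)/2}=2\binom{n}{(n-1)/2}$, which follows from Pascal together with $\binom{n}{(n-1)/2}=\binom{n}{(n+1)/2}$, makes this boundary term coincide exactly with the half-weighted tie term in the even formula, so the two probabilities agree.

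Finally, to extract the closed form I would expand $(1-\lambda)^{n-k}=\sum_{j\geq 0}\binom{n-k}{j}(-1)^j\lambda^j$ inside the majority-rule expression for $P^{(n)}_\mathrm{succ}$, swap the order of summation with $i=k+j$, and apply the Vandermonde-type relation $\binom{n}{k}\binom{n-k}{i-k}=\binom{n}{i}\binom{i}{k}$ to factor out $\binom{n}{i}\lambda^i$. The remaining inner sum, after the substitution $\ell=i-k$, becomes the partial alternating sum $\sum_{\ell=0}^{i-(n+1)/2}(-1)^\ell\binom{i}{\ell}$, which by the classical telescoping identity $\sum_{\ell=0}^{m}(-1)^\ell\binom{i}{\ell}=(-1)^m\binom{i-1}{m}$ evaluates to $(-1)^{i-(n+1)/2}\binom{i-1}{(n-1)/2}$, producing exactly the formula claimed in the theorem. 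The main obstacle is the bookkeeping in the middle step: the ambiguous outcome contributes to $P^{(n+1)}_\mathrm{succ}$ with weight $\tfrac{1}{2}$ but is absent from $P^{(n)}_\mathrm{succ}$, and these two expressions balance only because of the numerical coincidence $\binom{2\ell}{\ell}=2\binom{2\ell-1}{\ell}$ forced by Pascal's rule.
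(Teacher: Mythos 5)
Your proposal is correct, and it reaches the closed form by essentially the same route as the paper (reduce to the binomial tail $\sum_{k\geq (n+1)/2}\binom{n}{k}\lambda^k(1-\lambda)^{n-k}$, expand $(1-\lambda)^{n-k}$, swap summation order, and evaluate the partial alternating sum — your ``classical telescoping identity'' is precisely the paper's Lemma \ref{lem:binomsum}, just proved by telescoping Pascal's rule instead of by the paper's explicit factor-collection). Where you genuinely diverge is the equality $P_\mathrm{succ}^{(n)}=P_\mathrm{succ}^{(n+1)}$. The paper establishes this by expanding \emph{both} tail sums into polynomials in $\lambda$, then matching coefficients term by term using Lemma \ref{lem:binomsum} twice together with $\binom{n+1}{i}=\bigl(1+\tfrac{i}{n+1-i}\bigr)\binom{n}{i}$ and a separate cancellation of the $\lambda^{n+1}$ terms. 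You instead work directly at the level of the tail sums: multiply $P_\mathrm{succ}^{(n)}$ by $1=\lambda+(1-\lambda)$, re-index, collapse with Pascal's identity, and observe that the single surviving boundary term $\binom{n}{(n-1)/2}\lambda^{(n+1)/2}(1-\lambda)^{(n+1)/2}$ equals the half-weighted tie contribution in $P_\mathrm{succ}^{(n+1)}$ because $\binom{n+1}{(n+1)/2}=2\binom{n}{(n-1)/2}$. This is cleaner and more illuminating: it isolates exactly why the ambiguous middle outcome is responsible for the ``rule of three,'' whereas the paper's coefficient-matching obscures that mechanism, and it sidesteps the heaviest computation in the paper's proof. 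The only caveat is that your argument still presupposes the optimality of the majority-rule weights and the half-weight treatment of the tie, which the paper justifies in the discussion preceding the theorem; as long as that discussion is taken as given (as the paper itself does), your proof is complete.
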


\begin{figure}[t]
    \centering
    \subfloat[\label{fig:bindistcomp}]{
        \centering
        \includegraphics[width=0.5\linewidth]{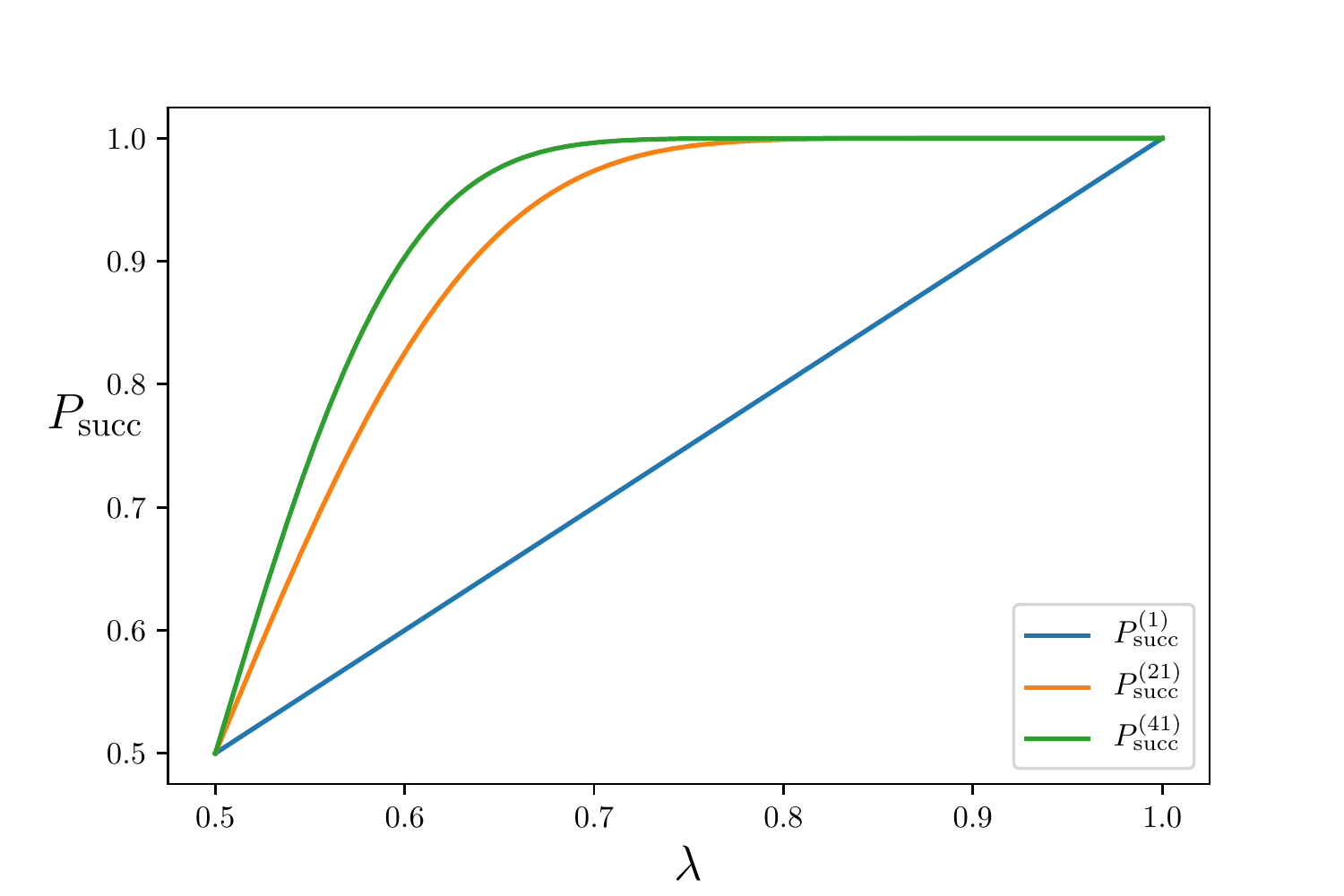}
        }
    \subfloat[\label{fig:binvalcomp}]{
        \centering
        \includegraphics[width=0.5\linewidth]{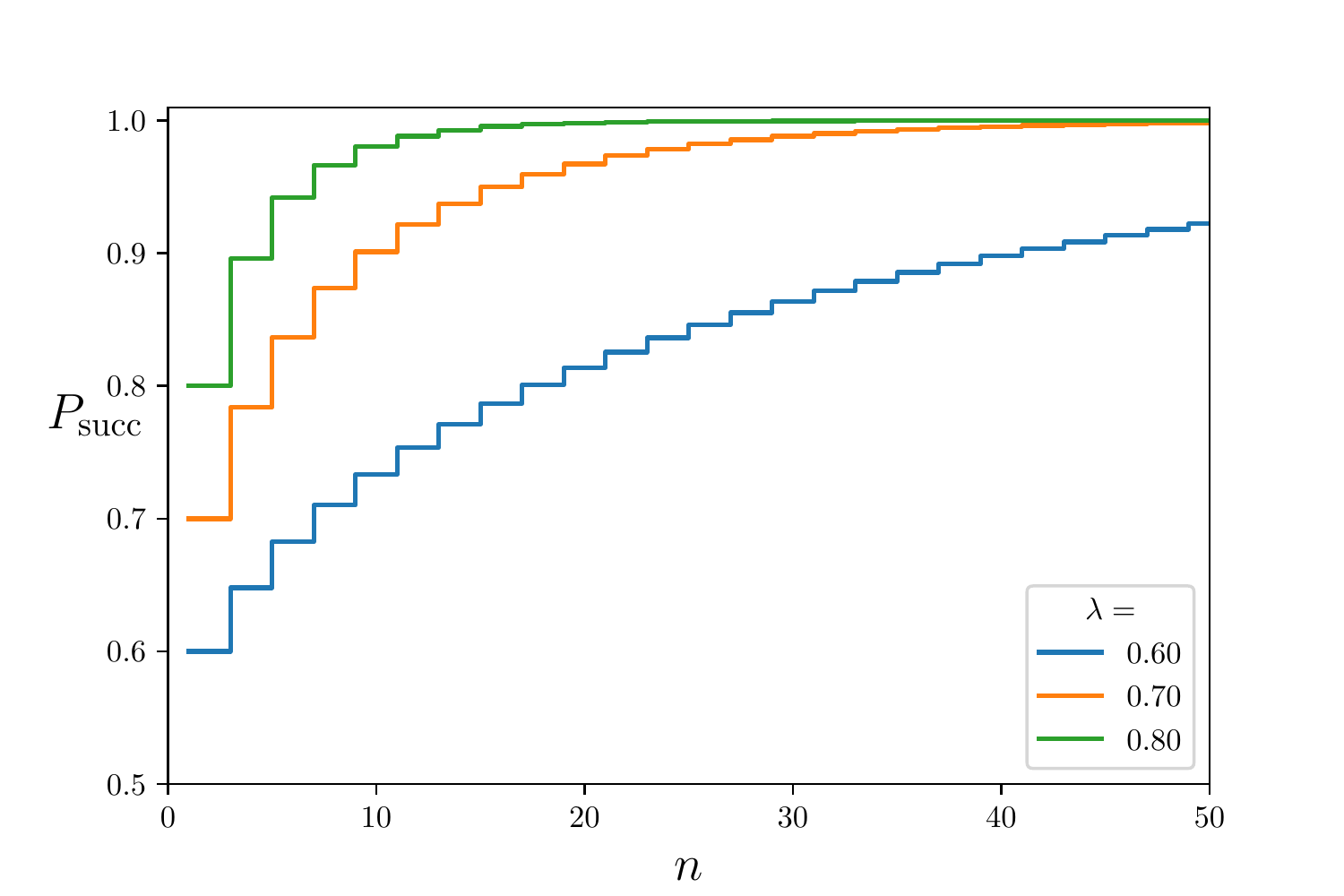}
        }
    \caption{
        The success probability of distinguishing between two eigenstates of a binary observable $\Ao$ in arbitrary dimension $d$ after $n$ measurements of $\Ao$.
        \protect\subref{fig:bindistcomp} Comparison of the success probability distribution for $n=1,21$ and $41$.
        \protect\subref{fig:binvalcomp} Comparison of the success probability for $n=1,\dots,50$ when $\lambda=0.6,0.7$ and $0.8$.
        As $n$ increases the success probability nears 1 for lower and lower values of the maximum eigenvalue $\lambda$.
        }
        \label{fig:binprob}
\end{figure}
A plot of this success probability for several values of $n$ is given in Figure \ref{fig:binprob} in terms of its overall form (Figure \ref{fig:bindistcomp}) and for particular values of $\lambda$ (Figure \ref{fig:binvalcomp}). 
As we increase $n$ we need smaller values of $\lambda$ (corresponding to noisier observables) in order to near complete success in distinguishing the states.
The proof of this result requires us to derive the success probability for both the case of $n$ and $n+1$ with odd $n$ and show that they coincide.
In order to simplify this we make use of the following lemma:
\begin{lemma}\label{lem:binomsum}
    For odd $n$ and any $\frac{n+1}{2}\leq i \leq n$,
    \begin{equation}
        \sum_{j=0}^{i-\frac{n+1}{2}}\binom{i}{j}(-1)^{j}=\binom{i-1}{\frac{n-1}{2}}(-1)^{i-\frac{n+1}{2}}.
    \end{equation}
\end{lemma}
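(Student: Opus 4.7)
The plan is to first reindex the identity into a symmetric form and then prove it by a one-step induction using Pascal's rule.

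First, I introduce the shift $k=i-\tfrac{n+1}{2}$, which runs over $\{0,1,\dots,\tfrac{n-1}{2}\}$ as $i$ runs over $\{\tfrac{n+1}{2},\dots,n\}$. Because $(i-1)-\tfrac{n-1}{2}=i-\tfrac{n+1}{2}=k$, the binomial symmetry gives $\binom{i-1}{(n-1)/2}=\binom{i-1}{k}$, so the lemma is equivalent to the following partial-sum identity, with $n$ no longer appearing explicitly:
\begin{equation*}
\sum_{j=0}^{k}(-1)^j\binom{i}{j}=(-1)^k\binom{i-1}{k},\qquad 0\le k\le i-1.
\end{equation*}

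Second, I prove this reduced identity by induction on $k$ (with $i$ fixed). The base case $k=0$ is immediate since both sides equal $1$. For the inductive step, assume the identity at $k-1$; then adding the new term $(-1)^k\binom{i}{k}$ to the partial sum gives
\begin{equation*}
(-1)^{k-1}\binom{i-1}{k-1}+(-1)^k\binom{i}{k}=(-1)^k\Bigl[\binom{i}{k}-\binom{i-1}{k-1}\Bigr]=(-1)^k\binom{i-1}{k},
\end{equation*}
where the last equality is Pascal's rule $\binom{i}{k}=\binom{i-1}{k}+\binom{i-1}{k-1}$. Alternatively, one can view the same computation as a telescoping sum after applying Pascal's rule to each term of the left-hand side and reindexing $j\mapsto j-1$ in one of the two resulting pieces.

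The main obstacle, such as it is, is purely bookkeeping: carefully tracking the index shift $k=i-(n+1)/2$, the binomial symmetry that converts $\binom{i-1}{(n-1)/2}$ to $\binom{i-1}{k}$, and the alternating signs so that the reduction to the classical identity is clean. Once that is done, the induction is a one-line application of Pascal's rule, so no deeper combinatorial machinery (generating functions, hypergeometric summation, etc.) is needed.
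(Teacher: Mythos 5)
Your proof is correct, and it takes a genuinely different route from the one in the paper. The paper proves the lemma by brute-force algebra: it writes out the partial alternating sum term by term and repeatedly factors out $(i-1)$, then $(i-2)$, and so on, until the whole expression collapses to $\frac{(-1)^{i-(n+1)/2}}{(i-\frac{n+1}{2})!}\prod_{k}(i-k)$, which is then recognised as the stated binomial coefficient. You instead observe, via the substitution $k=i-\tfrac{n+1}{2}$ and the symmetry $\binom{i-1}{(n-1)/2}=\binom{i-1}{k}$ (valid here because $\tfrac{n-1}{2}\le i-1$ on the stated range), that the lemma is exactly the classical identity $\sum_{j=0}^{k}(-1)^j\binom{i}{j}=(-1)^k\binom{i-1}{k}$, which you then establish by a one-step induction on $k$ using Pascal's rule. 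Your route buys transparency and rigour: the paper's successive-factoring argument hides its inductive content inside ellipses and is tedious to verify, whereas your reduction makes clear that the combinatorial content is a standard partial-sum identity and the induction is a single application of $\binom{i}{k}-\binom{i-1}{k-1}=\binom{i-1}{k}$. The paper's computation, for its part, produces the closed form directly in the variables $i$ and $n$ in which it is later used, at the cost of readability. Both arguments are valid over the full range $\tfrac{n+1}{2}\le i\le n$.
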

\begin{proof}
    This is a direct consequence of the sum and found by collecting common factors:
    \begin{align*}
        \sum_{j=0}^{i-\frac{n+1}{2}}\binom{i}{j}(-1)^{j}&=1- i + \frac{i(i-1)}{2} -\frac{i(i-1)(i-2)}{3!}+\dots +\frac{i(i-1)(i-2)\dots(i-i+\frac{n-1}{2})}{(i-\frac{n+1}{2})!}(-1)^{i-\frac{n+1}{2}}\\
        &=-(i-1)\left(1 -\frac{i}{2}+ \frac{i(i-2)}{3!} -\dots +\frac{i(i-2)\dots(i-i+\frac{n-1}{2})}{(i-\frac{n+1}{2})!}(-1)^{i-\frac{n+3}{2}}\right)\\
        &=\frac{(i-1)(i-2)}{2}\left(1 - \frac{i}{3} -\dots +\frac{2i\dots(i-i+\frac{n-1}{2})}{(i-\frac{n+1}{2})!}(-1)^{i-\frac{n+5}{2}}\right)\\
        &=\frac{(-1)^{i-\frac{n+1}{2}}}{(i-\frac{n+1}{2})!}\prod_{k=1}^{i-\frac{n+1}{2}}(i-k)= (-1)^{i-\frac{n+1}{2}} \frac{(i-1)!}{(i-\frac{n+1}{2})!\frac{n-1}{2}!}=(-1)^{i-\frac{n+1}{2}} \binom{i-1}{\frac{n-1}{2}} \, .
    \end{align*}  
\end{proof}

\begin{proof}[Proof of Theorem \ref{thm:binsuccprob}]
    We first consider the case for odd $n$.
    We have already argued that the observable $\Bo^{(n)}$ that optimises the success probability must be of the form 
    \begin{align*}
    \Bo^{(n)}(+) & =\sum_{i=\frac{n+1}{2}}^n \binom{n}{i}A^i (I-A)^{n-i} \, , \quad \Bo^{(n)}(-)  = \sum_{i=\frac{n+1}{2}}^n \binom{n}{i}A^{n-i} (I-A)^{i} \, .
    \end{align*}
    Since the states $\rho_\pm$ satisfy $\tr{A\rho_+}=\tr{(I-A)\rho_-}=\lambda$, we see that $\tr{\rho_+\Bo^{(n)}(+)}=\tr{\rho_-\Bo^{(n)}(-)}$, and so the success probability reduces to 
    \[
        P_\mathrm{succ}^{(n)} = \sum_{i=\frac{n+1}{2}}^n \binom{n}{i}\lambda^i (1-\lambda)^{n-i}.
    \]
    By making use of the binomial expansion $(1-\lambda)^{n-i}=\sum_{j=0}^{n-i}\binom{n-i}{j}(-\lambda)^j$, we can rewrite the success probability as 
    \[
        \begin{split}
            P_\mathrm{succ}^{(n)}&= \sum_{i=\frac{n+1}{2}}^n \binom{n}{i}\sum_{j=0}^{n-i}\binom{n-i}{j}(-1)^j\lambda^{i+j}=\sum_{i=\frac{n+1}{2}}^n \sum_{j=0}^{n-i}\binom{n}{i+j}\binom{i+j}{j}(-1)^j\lambda^{i+j} \, .
        \end{split}
    \]
    If we let $\ell=i+j$, noting that $\ell = \frac{n+1}{2},\dots,n$, and $j=\ell-i = 0,\dots, \ell-\frac{n+1}{2}$, then we arrive at 
    \[
        P_\mathrm{succ}^{(n)}=\sum_{i=\frac{n+1}{2}}^n \binom{n}{i}\sum_{j=0}^{i-\frac{n+1}{2}}\binom{i}{j}(-1)^j \lambda^i
    \]
    which, by Lemma \ref{lem:binomsum}, is our intended result for the odd case.

    For the even case we start by recalling that the success probability does not depend on which effect contains the operators $A^{\frac{n+1}{2}}(I-A)^{\frac{n+1}{2}}$ as they do not contribute to the final success probability, and so we can choose 
    \begin{align*}
    \Bo^{(n)}(+)= \sum_{i=\frac{n+3}{2}}^{n+1}\binom{n+1}{i} A^i (I-A)^{n+1-i} \, .
    \end{align*}
    This means that the complement effect is of the form 
    \[
        \Bo^{(n)}(-)= \binom{n+1}{\frac{n+1}{2}}A^{\frac{n+1}{2}}(I-A)^{\frac{n+1}{2}} + \sum_{i=\frac{n+3}{2}}^{n+1}\binom{n+1}{i} A^{n+1-i} (I-A)^{i} \, ,
    \]
    from which we see that the success probability is
    \begin{equation}\label{eq:psn+1initial}
        P_\mathrm{succ}^{(n+1)}= \sum_{i=\frac{n+3}{2}}^{n+1}\binom{n+1}{i} \lambda^{i}(1-\lambda)^{n+1-i} +\frac{1}{2}\binom{n+1}{\frac{n+1}{2}}\lambda^{\frac{n+1}{2}}(1-\lambda)^{\frac{n+1}{2}}.
    \end{equation}
    Again making use of the binomial expansion of $(1-\lambda)^\frac{n+1}{2}$ we see that the second term in Equation \eqref{eq:psn+1initial} (omitting the factor of $\frac{1}{2}$) can be written as
    \begin{align*}
        \binom{n+1}{\frac{n+1}{2}}\lambda^{\frac{n+1}{2}}(1-\lambda)^{\frac{n+1}{2}} &=
        \sum_{i=0}^\frac{n+1}{2} \binom{n+1}{\frac{n+1}{2}}\binom{\frac{n+1}{2}}{i}(-1)^{i}\lambda^{i+\frac{n+1}{2}}\\
        &=\sum_{i=\frac{n+1}{2}}^{n+1} \binom{n+1}{\frac{n+1}{2}}\binom{\frac{n+1}{2}}{i-\frac{n+1}{2}}(-1)^{i-\frac{n+1}{2}}\lambda^{i}\\
        &=\binom{n+1}{\frac{n+1}{2}}\lambda^\frac{n+1}{2}+\sum_{i=\frac{n+3}{2}}^{n+1} \binom{n+1}{i}\binom{i}{\frac{n+1}{2}}(-1)^{i-\frac{n+1}{2}}\lambda^{i},\\
    \end{align*}
    which, when combined with the corresponding expansion for the first term
    \begin{align*}
        \sum_{i=\frac{n+3}{2}}^{n+1}\binom{n+1}{i} \lambda^{i}(1-\lambda)^{n+1-i} & =  \sum_{i=\frac{n+3}{2}}^{n+1} \sum_{j=0}^{n+1-i}\binom{n+1}{i}\binom{n+1-i}{j}(-1)^{j}\lambda^{i+j}\\
        & =  \sum_{i=\frac{n+3}{2}}^{n+1} \sum_{j=0}^{i-\frac{n+3}{2}}\binom{n+1}{i}\binom{i}{j}(-1)^{j}\lambda^{i},
    \end{align*}
    leads to the following form of the success probability:
    \begin{equation} \label{eq:psn+1interm}
        P_\mathrm{succ}^{(n+1)}= \frac{1}{2}\binom{n+1}{\frac{n+1}{2}}\lambda^\frac{n+1}{2}+ \sum_{i=\frac{n+3}{2}}^{n+1}\!\binom{n+1}{i}\!\left(\frac{1}{2}\binom{i}{\frac{n+1}{2}}(-1)^{i-\frac{n+1}{2}}+\sum_{j=0}^{i-\frac{n+3}{2}}\binom{i}{j}(-1)^{j} \right)\lambda^i.
    \end{equation}
    
    In order to reach $P_\mathrm{succ}^{(n)}$ we need to remove terms of the order $\lambda^{n+1}$, as well as binomial coefficients involving $n+1$.
    In order to resolve the first of these we note that for odd $n$
    \[
        \begin{split}
            \sum_{j=0}^{n+1}\binom{n+1}{j}(-1)^j &= \sum_{j=0}^{\frac{n-3}{2}}\binom{n+1}{j}(-1)^j + \sum_{j=\frac{n+3}{2}}^{n+1}\binom{n+1}{j}(-1)^j +\binom{n+1}{\frac{n+1}{2}}(-1)^\frac{n+1}{2}\\
            &= 2 \sum_{j=0}^{\frac{n-3}{2}}\binom{n+1}{j}(-1)^j + \binom{n+1}{\frac{n+1}{2}}(-1)^\frac{n+1}{2} = 0 \, ,
        \end{split}
    \]    
    since $\sum_{j=0}^{n+1}\binom{n+1}{j}(-1)^j= (1-1)^{n+1} = 0$.
    From this we see that the $\lambda^{n+1}$ terms cancel.
    For the binomial coefficients we begin by noting that for a given integer $i\leq n$, $\binom{n+1}{i}$ can be rewritten:
    \[
        \binom{n+1}{i} =\frac{(n+1)!}{i! (n+1-i)!}= \frac{n+1}{n+1-i}\binom{n}{i}=  \left(1 + \frac{i}{n+1-i}\right)\binom{n}{i},
    \]
    and in particular $\binom{n+1}{\frac{n+1}{2}} = 2\binom{n}{\frac{n+1}{2}}$.
    Combining these we can rewrite Equation \eqref{eq:psn+1interm} as
    \begin{align*}
        P_\mathrm{succ}^{(n+1)}&= \binom{n}{\frac{n+1}{2}}\lambda^\frac{n+1}{2} + \sum_{i=\frac{n+3}{2}}^{n}\left(1 + \frac{i}{n+1-i}\right)\binom{n}{i}\left(\frac{1}{2}\binom{i}{\frac{n+1}{2}}(-1)^{i-\frac{n+1}{2}}+\sum_{j=0}^{i-\frac{n+3}{2}}\binom{i}{j}(-1)^{j} \right)\lambda^i\\
        &=\binom{n}{\frac{n+1}{2}}\binom{\frac{n+1}{2}}{0}\lambda^\frac{n+1}{2} + \sum_{i=\frac{n+3}{2}}^{n}\left(1 + \frac{i}{n+1-i}\right)\binom{n}{i}\left(\sum_{j=0}^{i-\frac{n+1}{2}}\binom{i}{j}(-1)^{j}-\frac{1}{2}\binom{i}{\frac{n+1}{2}}(-1)^{i-\frac{n+1}{2}} \right)\lambda^i\\
        &= \sum_{i=\frac{n+1}{2}}^{n}\sum_{j=0}^{i-\frac{n+1}{2}}\binom{n}{i}\binom{i}{j}(-1)^{j}\lambda^i + \sum_{i=\frac{n+3}{2}}^{n} \frac{i}{n+1-i} \left(\binom{i-1}{\frac{n-1}{2}}(-1)^{i-\frac{n+1}{2}}-\frac{n+1}{2i}\binom{i}{\frac{n+1}{2}}(-1)^{i-\frac{n+1}{2}}\right)\lambda^i\\
        &= P_\mathrm{succ}^{(n)},
    \end{align*}
    where we have again used Lemma \ref{lem:binomsum} between the second and third line, as well as between the third and fourth.

\end{proof}

\section{Higher-outcome observables}\label{sec:higher}

In this section $\{\rho_i\}_{i=1}^N$ is a set of mutually orthogonal pure states and $\Ao$ is a commutative $N$-valued observable satisfying the conditions \eqref{eq:uniformeig-0}. 
We further assume that $1/N < \lambda < 1$ since the boundary values are not interesting.
We recall that $\Ao^{(n)}$ denotes the $n$-round observable of the form in Equation \eqref{eq:commrepmeas}, i.e., 
\begin{equation}
 \Ao^{(n)}(x_1,\ldots,x_n)=\Ao(x_1)\Ao(x_2)\cdots\Ao(x_n) \, .
\end{equation}

We start by characterising the ambiguous measurement arrays for the given task.
For any $\bx\in\Omega_\Ao^n$ and $j\in\Omega_\Ao$, we denote by 
$m(\bx,j)$ the multiplicity of $j$ in $\bx$, i.e., the number of occurences of $j$ in $\bx$.
For instance, $m((1,2,1),1)=2$, $m((1,2,1),2)=1$ and $m((1,2,1),3)=0$.

\begin{proposition}\label{prop:k-ambiglud}
Let $\bx\in\Omega_\Ao^n$.
The following are equivalent:
\begin{itemize}
\item[(i)]  $\bx$ is ambiguous with respect to $\{ \rho_{i_1},\ldots,\rho_{i_k}\}$,
\item[(ii)] $m(\bx,i_1)=m(\bx,i_2)=\cdots = m(\bx,i_k)$.
\end{itemize}
\end{proposition}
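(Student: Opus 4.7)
The plan is to compute $\tr{\rho_i \Ao^{(n)}(\bx)}$ explicitly as a simple function of the single integer $m(\bx,i)$, and then read off the equivalence by an injectivity argument.

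First I would invoke Proposition \ref{prop:eigen} to note that each state $\rho_i$ is a common eigenvector of every effect $\Ao(j)$, with eigenvalue $\lambda$ when $j=i$ and eigenvalue $\mu := (1-\lambda)/(N-1)$ when $j \neq i$. Since $\Ao$ is commutative, Equation \eqref{eq:commrepmeas} gives $\Ao^{(n)}(\bx) = \Ao(x_1)\cdots\Ao(x_n)$, so each $\rho_i$ is also an eigenvector of $\Ao^{(n)}(\bx)$. Counting how many factors act with eigenvalue $\lambda$ versus $\mu$ on $\rho_i$ yields
\begin{equation*}
\Ao^{(n)}(\bx)\,\rho_i \;=\; \lambda^{m(\bx,i)}\,\mu^{\,n - m(\bx,i)}\,\rho_i,
\end{equation*}
and hence $\tr{\rho_i\,\Ao^{(n)}(\bx)} = \lambda^{m(\bx,i)}\,\mu^{\,n - m(\bx,i)}$ (using $\tr{\rho_i}=1$).

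With this formula in hand, the equivalence reduces to a statement about when the values $\lambda^{m(\bx,i_r)}\mu^{n-m(\bx,i_r)}$ agree as $r$ ranges over $1,\ldots,k$. The implication (ii)$\Rightarrow$(i) is then immediate: equal multiplicities force equal expectation values, which is precisely ambiguity with respect to $\{\rho_{i_1},\ldots,\rho_{i_k}\}$. For (i)$\Rightarrow$(ii) I would rewrite the expectation value as $\mu^n (\lambda/\mu)^{m(\bx,i)}$ and use the standing assumption $1/N < \lambda < 1$, which forces $\lambda \neq \mu$ and $\mu \neq 0$; consequently the map $p \mapsto (\lambda/\mu)^p$ is strictly monotone, hence injective on $\{0,1,\ldots,n\}$, so equality of the expectation values forces $m(\bx,i_1) = \cdots = m(\bx,i_k)$.

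There is no real obstacle here; the only point that needs a line of justification is ruling out the degenerate case $\lambda = \mu$, which is exactly why the range $1/N < \lambda < 1$ was imposed at the beginning of the section (the value $\lambda = 1/N$ would collapse $\Ao$ to a trivial observable with every outcome array ambiguous with respect to every subset). One might also note, as a sanity check, that the formula $\tr{\rho_i\Ao^{(n)}(\bx)}$ depends on $\bx$ only through the multiplicity vector $(m(\bx,1),\ldots,m(\bx,N))$, which matches the ordering-independence remark already made after Equation \eqref{eq:commrepmeas}.
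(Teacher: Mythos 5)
Your proof is correct and follows essentially the same route as the paper: both reduce the statement to the formula $\tr{\rho_i\Ao^{(n)}(\bx)}=\lambda^{m(\bx,i)}\mu^{n-m(\bx,i)}$ and then use $\lambda\neq\mu$ (guaranteed by $\lambda>1/N$) to conclude that equal probabilities force equal multiplicities. The paper phrases this last step as a proof by contradiction ($\lambda^{m_i-m_j}=\mu^{m_i-m_j}$ would force $\lambda=1/N$) while you phrase it as injectivity of $p\mapsto(\lambda/\mu)^p$, but these are the same argument.
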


\begin{proof}
    Fix an outcome array $\bx$. 
    From Equations \eqref{eq:commrepmeas} and \eqref{eq:uniformeig-0}, we can see that for any state $\rho_j$ and result array $\bx$,
    \begin{equation}\label{eq:kambig}
        \tr{\rho_j\Ao^{(n)}(\bx)} = \lambda^{m_j} \left(\frac{1-\lambda}{N-1}\right)^{n-m_j} \, ,
    \end{equation}
    where we have denoted $m_j:=m(\bx,j)$.
    
    Let us assume that (i) holds.
    For ease of notation, and without loss of generality, let us assume that $\bx$ is ambiguous with regard to the first $k$ states, i.e., $\tr{\rho_j\Ao^{(n)}(\bx)}= \text{const.}$ for $j\in\{1,\dots,k\}$.
    By equation \eqref{eq:kambig} this means that
    \begin{equation}\label{eq:prop1pf}
        \lambda^{m_i} \left(\frac{1-\lambda}{N-1}\right)^{n-m_i} = \lambda^{m_j} \left(\frac{1-\lambda}{N-1}\right)^{n-m_j} 
    \end{equation}
    for $i,j\in \{1,\dots,k\}$.
    We make a counter assumption that (ii) does not hold, i.e., $m_i\neq m_j$ for some distinct $i,j\in \{1,\dots,k\}$.
    Without loss of generality, we can assume that $m_i>m_j$, and hence $n-m_j > n-m_i$.
    We can therefore rearrange \eqref{eq:prop1pf} to arrive at 
    \begin{equation}
        \lambda^{m_i - m_j} = \left(\frac{1-\lambda}{N-1}\right)^{m_i - m_j}.
    \end{equation}
    Since $m_i - m_j>0$, we can see that $\lambda = (1-\lambda)/(N-1)$, and so $\lambda = 1/N$. 
    However, this is in contradiction to the initial assumption $\lambda > 1/N$ and so $m_i=m_j$ for all $i,j\in\{1,\dots,k\}$.
    
    To prove that (ii) implies (i), it is sufficient to see from Equation \eqref{eq:kambig} that any set of $k$ elements of $\Omega_\Ao$ with the same multiplicity in a given outcome array $\bx$ will produce the same probabilities for their respective states.
    \end{proof}

The previous result gives another confirmation of our earlier proof of the `rule of three' in the case $N=2$. 
The situation is different for $N>2$, as in that case an outcome array $(i_1,i_2)$ is not ambiguous with respect to the full set $\{\rho_i\}_{i=1}^N$.
Hence, one might expect to gain some partial information from having a given outcome in a second measurement; that is, after obtaining (1,2) outcome array one may assume that the measured state is either $\rho_1$ or $\rho_2$ but no other.
However, as we will see, this partial information does not improve our likelihood of success in discrimination, and it is only by performing the third measurement that such results become more informative. 
The `rule of three' therefore also applies in this case.

\begin{proposition}\label{prop:2N-ary}
The success probability of distinguishing the states $\{\rho_i\}_{i=1}^N$ does not increase if a second measurement of $\Ao$ is performed, i.e., $P_\mathrm{succ}^{(2)}=P_\mathrm{succ}^{(1)}$.
\end{proposition}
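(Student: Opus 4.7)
The plan is to compute $P_\mathrm{succ}^{(2)}$ directly by exploiting the eigenvalue structure given in Proposition \ref{prop:eigen} and then observing that the combinatorics force the result back to $\lambda$. The key input is the explicit form of $\Ao^{(2)}(x_1,x_2) = \Ao(x_1)\Ao(x_2)$ from Equation \eqref{eq:commrepmeas}, together with the identity $\Ao(j)\rho_i = \lambda \rho_i$ if $i=j$ and $\Ao(j)\rho_i = \frac{1-\lambda}{N-1} \rho_i$ otherwise.

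First, I would classify the outcome arrays $\bx = (x_1,x_2) \in \Omega_\Ao^2$ into two types: the $N$ diagonal arrays $(j,j)$, and the $N(N-1)$ off-diagonal arrays $(i,j)$ with $i\neq j$. Using the eigenvalue relations, a short calculation gives
\begin{equation*}
    \tr{\rho_k \Ao^{(2)}(j,j)} = \begin{cases} \lambda^2 & k = j, \\ \left(\tfrac{1-\lambda}{N-1}\right)^2 & k\neq j, \end{cases} \qquad
    \tr{\rho_k \Ao^{(2)}(i,j)} = \begin{cases} \lambda\cdot\tfrac{1-\lambda}{N-1} & k \in \{i,j\}, \\ \left(\tfrac{1-\lambda}{N-1}\right)^2 & k \notin \{i,j\}. \end{cases}
\end{equation*}
Since the assumption $\lambda > 1/N$ is equivalent to $\lambda > \tfrac{1-\lambda}{N-1}$, I can argue that for the optimal deterministic post-processing, each array $(j,j)$ must be relabelled as $j$, and each off-diagonal array $(i,j)$ must be relabelled as some element of $\{i,j\}$; any other choice strictly decreases the corresponding trace and hence the success probability. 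This is the step where Proposition \ref{prop:k-ambiglud} enters implicitly, because it tells us exactly which states among $\{\rho_i\}_{i=1}^N$ compete for each array.

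Next, I would assemble the success probability as a sum of contributions. The diagonal arrays contribute $N\lambda^2$ in total to $\sum_k \tr{\rho_k \Bo^{(2)}(k)}$. For an off-diagonal array $(i,j)$, whichever of $i$ or $j$ we choose as its label, the contribution is exactly $\lambda \cdot \tfrac{1-\lambda}{N-1}$, so the total contribution is independent of the specific tie-breaking choice and equals $N(N-1) \cdot \lambda \cdot \tfrac{1-\lambda}{N-1} = N\lambda(1-\lambda)$. Adding the two pieces gives
\begin{equation*}
    N\,P_\mathrm{succ}^{(2)} = N\lambda^2 + N\lambda(1-\lambda) = N\lambda,
\end{equation*}
so $P_\mathrm{succ}^{(2)} = \lambda = P_\mathrm{succ}^{(1)}$.

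The only real obstacle is the verification step that no non-deterministic or ``mixed'' post-processing can do better; but since deterministic Markov kernels already form the extreme points of the post-processing polytope and each off-diagonal contribution is independent of the chosen label among $\{i,j\}$, optimality is immediate. The striking point worth flagging in the discussion is that although the off-diagonal arrays are not fully ambiguous (they do rule out $N-2$ of the states), they provide no net gain in average success probability because the partial information they carry exactly balances out across all $N$ states.
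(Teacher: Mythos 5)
Your proof is correct and arrives at the same optimal post-processing and the same final identity $\lambda^2+\lambda(1-\lambda)=\lambda$ as the paper, but the optimisation step is organised differently. The paper writes $N P_\mathrm{succ}^{(2)}$ as a linear functional of the kernel weights grouped by the three possible probability values, and then uses the kernel normalisation twice (Eqs.~\eqref{eq:1-wkk} and \eqref{eq:1-wij}) to eliminate the redundant weights and isolate two surviving coefficients --- one positive, multiplying $\sum_k w^k_{kk}$, and one negative, multiplying $\sum_k\sum_{i\neq k}\sum_{j\neq k,i}w^k_{ij}$ --- from which the optimal kernel is read off. You instead swap the order of summation so that $N P_\mathrm{succ}^{(2)}=\sum_{\bx}\sum_k w(k,\bx)\tr{\rho_k\Ao^{(2)}(\bx)}$ decouples into an independent convex maximisation for each outcome array $\bx$: each array is assigned to a state of maximal likelihood, the tie on an off-diagonal array $(i,j)$ being harmless because both candidates $i$ and $j$ give the same value $\lambda\tfrac{1-\lambda}{N-1}$. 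Your version is shorter, avoids the normalisation bookkeeping entirely, and makes the restriction to deterministic kernels self-evident (each per-array maximum of a convex combination is attained at a vertex); the inequalities you need, $\lambda^2>\lambda\tfrac{1-\lambda}{N-1}>\left(\tfrac{1-\lambda}{N-1}\right)^2$, do all follow from $\lambda>1/N$. What the paper's global formulation buys is an explicit expression for $P_\mathrm{succ}^{(2)}$ as a function of an arbitrary kernel, which serves as the template for the three- and four-round computations, where the per-array assignment is complicated by the double-counting issues discussed there. One minor wording point: ``any other choice strictly decreases the corresponding trace'' should refer to the per-array contribution $\sum_k w(k,\bx)\tr{\rho_k\Ao^{(2)}(\bx)}$ rather than to a trace itself, but this does not affect the argument.
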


\begin{proof}
For a given effect $\Ao^{(2)}(i,j)$ and state $\rho_k$ we obtain one of three possible outcomes:
    \begin{equation}
        \tr{\rho_k\Ao^{(2)}(i,j)}=\begin{cases}
            \lambda^2, \quad &i=j=k,\\
            \lambda \left(\frac{1-\lambda}{N-1} \right), \quad &i=k\neq j \ \mathrm{or} \ j=k\neq i,\\
            \left(\frac{1-\lambda}{N-1} \right)^2, \quad & i\neq k \neq j.
        \end{cases}
    \end{equation}
    We post-process $\Ao^{(2)}$ via the Markov kernel $w:(k,(i,j))\mapsto w^k_{ij}$, $i,j,k=1,\dots,N$, to form the $N$-ary observable $\Bo^{(2)}$ with effects $\Bo^{(2)}(k) = \sum_{i,j} w^k_{ij} \Ao^{(2)}(i,j)$.
    The success probability $P_\mathrm{succ}^{(2)}$ takes the form
    \begin{align}
        P_\mathrm{succ}^{(2)} &=\frac{1}{N}\tr{\rho_k \sum_{k=1}^N\Bo^{(2)}(k)}\notag\\
        &= \frac{\lambda^2}{N}\sum_{k=1}^N w^k_{kk} +\frac{\lambda(1-\lambda)}{N(N-1)} \sum_{k=1}^N\sum_{i\neq k}(w^k_{ik}+w^k_{ki}) + \frac{1}{N} \left(\frac{1-\lambda}{N-1} \right)^2\sum_{k=1}^N\sum_{i,j\neq k}w^k_{ij} \, . \label{eq:tripsuc2}
    \end{align}
    The final term in Equation \eqref{eq:tripsuc2} can be decomposed as follows:
    \begin{align}\label{eq:1-wkk}
        \sum_{k=1}^N \sum_{i,j\neq k} w^k_{ij} & = \sum_{k=1}^N\left(\sum_{i\neq k} w^k_{ii} + \sum_{i\neq k}\sum_{j\neq k,i} w^k_{ij}\right) = \sum_{k=1}^N\left(\sum_{i} w^k_{ii} -w^k_{kk} + \sum_{i\neq k}\sum_{j\neq k,i} w^k_{ij}\right) \nonumber \\
        & = \sum_{k=1}^N\sum_{i} w^k_{ii} + \sum_{k=1}^N\left(- w^k_{kk} + \sum_{i\neq k}\sum_{j\neq k,i} w^k_{ij}\right) = N + \sum_{k=1}^N\left(- w^k_{kk} + \sum_{i\neq k}\sum_{j\neq k,i} w^k_{ij}\right),
    \end{align}
    where the last equality is a consequence of the normalisation of the kernel: $\sum_k \sum_i w^k_{ii} = \sum_i \sum_k w^k_{ii} = N$.

   Next we elaborate on the middle term in Equation \eqref{eq:tripsuc2}.
       For a fixed $k$ we have
    \begin{equation}
            \sum_{i\neq k}(w^k_{ik}+w^k_{ki})+ \sum_{i\neq k} \sum_{j\neq k, i} w^k_{ij} = \sum_{i} \sum_{j\neq  i} w^k_{ij}
    \end{equation}
    and therefore
    \begin{equation}\label{eq:1-wij}
        \sum_{k=1}^N\sum_{i\neq k}(w^k_{ik}+w^k_{ki})= \sum_{k=1}^N\sum_{i} \sum_{j\neq  i} w^k_{ij} - \sum_{k=1}^N\sum_{i\neq k} \sum_{j\neq k, i} w^k_{ij} =N(N-1) - \sum_{k=1}^N\sum_{i\neq k} \sum_{j\neq k, i} w^k_{ij} \, .
    \end{equation}
    Making use of Equations \eqref{eq:1-wkk} and \eqref{eq:1-wij} we can rearrange Equation \eqref{eq:tripsuc2}:
    \begin{align}
        P_\mathrm{succ}^{(2)} &= \left(\frac{1-\lambda}{N-1} \right)^2 + \lambda(1-\lambda) + \frac{1}{N} \left(\lambda^2 - \left(\frac{1-\lambda}{N-1} \right)^2\right)\sum_{k=1}^N w^k_{kk} + \frac{1}{N} \left(\left(\frac{1-\lambda}{N-1} \right)^2 -  \frac{\lambda(1-\lambda)}{N-1} \right) \sum_{k=1}^N \sum_{i\neq k} \sum_{j\neq k,i} w^k_{ij}\notag\\
        &= \left(\frac{1-\lambda}{N-1} \right)^2 + \lambda(1-\lambda) + \frac{(N\lambda-1)(1+(N-2)\lambda)}{N(N-1)^2}\sum_{k=1}^N w^k_{kk} + \frac{(1-\lambda)(1-N\lambda)}{N(N-1)^2}\sum_{k=1}^N \sum_{i\neq k} \sum_{j\neq k,i} w^k_{ij}.\label{eq:tripsuc2red}
    \end{align}
    For the range of $\lambda$ considered, namely $\frac{1}{N} < \lambda < 1$, we have $(N\lambda-1)(1+(N-2)\lambda)>0$ whereas $(1-\lambda)(1-N\lambda) < 0$.
    Hence, in order to maximise $P_\mathrm{succ}^{(2)}$ we set the weights $w^k_{kk}=1$ for all $k$ and $w^k_{ij}=0$ whenever all indices $i,j,k$ are different.
    These choices lead to a valid Markov kernel if we further set $w^i_{jj}=0$, $w^i_{ij}=1$ and $w^j_{ij}=0$  for all $i\neq j$.
    In doing so Equation \eqref{eq:tripsuc2red} reduces to 
    \begin{equation}
        P_\mathrm{succ}^{(2)} = \left(\frac{1-\lambda}{N-1} \right)^2 + \lambda(1-\lambda)+ \lambda^2 -  \left(\frac{1-\lambda}{N-1} \right)^2= \lambda =P_\mathrm{succ}^{(1)} \, .
    \end{equation}    
\end{proof}

We proceed to higher rounds of repetitions. 

\begin{proposition}
The success probability of distinguishing the states $\{\rho_i\}_{i=1}^N$ after three and four repeated measurements of $\Ao$ is given by
    \begin{align}
        P_\mathrm{succ}^{(3)}&=\frac{1}{N-1}\lambda\big((N-2)+(N+1)\lambda - N\lambda^2\big),\label{eq:P3}\\
        P_\mathrm{succ}^{(4)}&=\frac{1}{(N-1)^2}\lambda\big((N-2)(N-3) + 3(N^2-3)\lambda + (4+7N-5N^2)\lambda^2 + 2N(N-2)\lambda^3\big).
    \end{align}
\end{proposition}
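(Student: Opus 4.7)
The plan is to classify the outcome arrays $\bx\in\Omega_\Ao^n$ by their multiplicity profile (the unordered tuple of multiplicities $m(\bx,j)$ of the labels appearing in $\bx$), to fix an optimal deterministic Markov kernel by a majority rule, and to sum the resulting per-profile contributions. By Proposition~\ref{prop:eigen} combined with \eqref{eq:commrepmeas}, any array $\bx$ with $m_j := m(\bx,j)$ satisfies $\tr{\rho_j\Ao^{(n)}(\bx)} = \lambda^{m_j}((1-\lambda)/(N-1))^{n-m_j}$. Since $\lambda>1/N$ is equivalent to $\lambda > (1-\lambda)/(N-1)$, this quantity is strictly increasing in $m_j$, so the optimal label for $\bx$ is any state $j$ that attains $\max_j m_j$; by Proposition~\ref{prop:k-ambiglud} a tie corresponds to ambiguity among the tied states, and the choice among them is immaterial.

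Next, I would enumerate multiplicity profiles (partitions of $n$): for $n=3$ these are $(3)$, $(2,1)$, $(1,1,1)$, and for $n=4$ they are $(4)$, $(3,1)$, $(2,2)$, $(2,1,1)$, $(1,1,1,1)$. For each profile $\pi$ I would count the number $N_\pi$ of arrays of that type via multinomial coefficients and the choice of distinct labels, and then read off its per-array contribution $\lambda^{m_{\max}}((1-\lambda)/(N-1))^{n-m_{\max}}$, where $m_{\max}$ is the largest entry of $\pi$. Summing over profiles and dividing by $N$ gives
\[
P_\mathrm{succ}^{(n)}=\frac{1}{N}\sum_{\pi\vdash n} N_\pi \, \lambda^{m_{\max}(\pi)}\!\left(\frac{1-\lambda}{N-1}\right)^{n-m_{\max}(\pi)}.
\]
Expanding the factors $(1-\lambda)^{n-m_{\max}}$ by the binomial theorem and collecting powers of $\lambda$ over the common denominator $(N-1)^{n-m_{\max}}$ then produces the asserted polynomial expressions.

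The main obstacle is the combinatorial bookkeeping in the $n=4$ case: both profiles $(2,2)$ and $(2,1,1)$ contribute $\lambda^2(1-\lambda)^2$ terms, but with different combinatorial weights $3N(N-1)$ and $6N(N-1)(N-2)$, which must be correctly added and simplified before a factor $(N-1)^{-2}$ yields the aggregate coefficient $3(2N-3)/(N-1)$. Once the three terms for $n=3$ and the five terms for $n=4$ are in hand, the remaining algebra of expanding and collecting powers of $\lambda$ is purely mechanical. Conceptually, the argument is simply the $n\geq 3$ counterpart of the analysis carried out for $n=2$ in Proposition~\ref{prop:2N-ary}, with the added feature that ambiguous profiles of type $(1,1,1)$, $(2,2)$, and $(1,1,1,1)$ contribute uniformly regardless of the chosen tie-breaking.
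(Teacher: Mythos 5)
Your proposal is correct and would reproduce both formulas; the difference from the paper's proof lies mainly in how the optimal post-processing is justified. The paper keeps all kernel weights $w(j,\bx)$ in play, decomposes the outcome arrays relative to a fixed target label $j$ (into the sets $X^j_k$, $Y^j_{2}$, $Y^j_{1,1}$, $Z^j_{\cdot}$), uses the normalisation $\sum_j w(j,\bx)=1$ to eliminate the lowest-contribution sums, and only then reads off the optimal weights --- at which point it must explicitly handle the double-counting of arrays shared between $\bigcup_j X^j_2$ and $\bigcup_j Y^j_2$ and the threefold overlap of the $Y^j_{1,1}$. You instead exchange the order of summation and optimise array by array: since $P^{(n)}_\mathrm{succ}=\tfrac{1}{N}\sum_{\bx}\sum_j w(j,\bx)\tr{\rho_j\Ao^{(n)}(\bx)}$ and the inner sum is a convex combination, each $\bx$ is optimally assigned to an arg-max of $\tr{\rho_j\Ao^{(n)}(\bx)}$, i.e.\ to a label of maximal multiplicity, ties being immaterial by Proposition \ref{prop:k-ambiglud}. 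This is cleaner, sidesteps the double-counting discussion entirely, and establishes the majority rule uniformly for all $n$; the one step worth making explicit in a write-up is precisely this interchange of the sums over $j$ and $\bx$, which is what licenses the per-array greedy choice. Your enumeration by multiplicity profile is equivalent to the paper's set decomposition, and your counts check out: $N$, $3N(N-1)$, $N(N-1)(N-2)$ for $n=3$, and $N$, $4N(N-1)$, $3N(N-1)$, $6N(N-1)(N-2)$, $N(N-1)(N-2)(N-3)$ for $n=4$, with the two $\lambda^2(1-\lambda)^2$ profiles combining to the aggregate coefficient $3(2N-3)/(N-1)$ as you state; substituting these into your sum and expanding does yield both stated polynomials.
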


\begin{proof}
We provide the derivation of $P_\mathrm{succ}^{(3)}$; the same method applies to $P_\mathrm{succ}^{(4)}$ (with some additional care needed).
    Let $\boldsymbol{x}=(x_1,x_2,x_3)\in\{1,\dots,N\}^3$ be an array of measurement outcomes, with the corresponding effect for the observable $\Ao^{(3)}$ being $\Ao^{(3)}(\boldsymbol{x})=\Ao(x_1)\Ao(x_2)\Ao(x_3)$.
    Fix a state $\rho_j\in\{\rho_i\}_{i=1}^N$ and so, letting $m_j = m(\boldsymbol{x},j)$, the probability of obtaining outcome array $\boldsymbol{x}$ is:
    \[
        \tr{\rho_j \Ao^{(3)}(\boldsymbol{x})} = \lambda^{m_j} \left(\frac{1-\lambda}{N-1}\right)^{3-m_j}.
    \]
    We now post-process $\Ao^{(3)}$ via the deterministic kernel $w:\{1,\dots,N\} \times \{1,\dots,N\}^3 \rightarrow \{0,1\}$ to arrive at the observable $\Bo^{(3)}$ with effects $\Bo^{(3)}(y)= \sum_{\boldsymbol{x}} w(y,\boldsymbol{x})\Ao(\boldsymbol{x})$.
    Let $j\in\{1,\dots,N\}$ and $k=0,\dots,3$, then define the subsets $X^j_k = \{\boldsymbol{x}\in\{1,\dots,N\}^3\ |\ m(\boldsymbol{x},j) =k \}$.
    For $k=0,1,2,3$, these subsets have order $(N-1)^3, 3(N-1)^2, 3(N-1)$ and $1$, respectively. 
    We therefore have the following decomposition:
    \begin{equation}\label{eq:jj}
        \tr{\rho_j \Bo^{(3)}(j)} = w(j, \boldsymbol{j}) \lambda^3 + \lambda^2 \frac{1-\lambda}{N-1} \sum_{\boldsymbol{x}\in X^j_2} w(j,\boldsymbol{x}) + \lambda \left(\frac{1-\lambda}{N-1}\right)^2 \sum_{\boldsymbol{x}\in X^j_1} w(j,\boldsymbol{x}) + \left(\frac{1-\lambda}{N-1}\right)^3 \sum_{\boldsymbol{x}\in X^j_0} w(j,\boldsymbol{x}),
    \end{equation}
    where we introduce the notation $\boldsymbol{j}=(j,j,j)$. 
 
     The set $X^j_1$ has the following decomposition
    \begin{equation}\label{eq:x1decomp}
        X^j_1 = Y^j_2 \cup Y^j_{1,1},
    \end{equation}
    where $Y^j_2 = \{\boldsymbol{x}\in X^j_1\ |\ m(\boldsymbol{x},k) =2, k\neq j\}$ and $Y^j_{1,1} = \{\boldsymbol{x}\in X^j_1\ |\ m(\boldsymbol{x},k) =1 \ \mathrm{and} \ m(\boldsymbol{x},\ell) =1, k,\ell\neq j\}$.
    The subsets $Y^j_2$ and $Y^j_{1,1}$ have orders $3(N-1)$ and $3(N-1)(N-2)$, respectively, for each $j$.
    In a similar fashion, we can decompose $X^j_0$ in the following way:
    \begin{equation}\label{eq:x0decomp}
        X^j_0 = Z^j_3 \cup Z^j_{2,1} \cup Z^j_{1,1,1}
    \end{equation}
    where $Z^j_3 = \{\boldsymbol{x}\in X^j_0\ |\ m(\boldsymbol{x},k)=3, k\neq j\}$, $Z^j_{2,1} = \{\boldsymbol{x}\in X^j_0\ |\ m(\boldsymbol{x},k) =2 \ \mathrm{and} \ m(\boldsymbol{x},\ell) =1, k,\ell\neq j\}$ and $Z^j_{1,1,1} = \{\boldsymbol{x}\in X^j_0\ |\ m(\boldsymbol{x},k) =1, m(\boldsymbol{x},\ell) =1 \ \mathrm{and} \ m(\boldsymbol{x},r)=1, k,\ell,r\neq j\}$.

    Making use of Equations \eqref{eq:jj}, \eqref{eq:x1decomp} and \eqref{eq:x0decomp}, the success probability can be expressed as
    \begin{align}
        P_\mathrm{succ}^{(3)} =& \frac{\lambda^3}{N}\sum_j w(j,\boldsymbol{j}) + \frac{\lambda^2(1-\lambda)}{N(N-1)}\sum_j \sum_{\boldsymbol{x}\in X^j_2} w(j,\boldsymbol{x}) + \frac{\lambda(1-\lambda)^2}{N(N-1)^2} \sum_j\sum_{\boldsymbol{x}\in X^j_1} w(j,\boldsymbol{x}) + \frac{(1-\lambda)^3}{N(N-1)^3} \sum_j\sum_{\boldsymbol{x}\in X^j_0} w(j,\boldsymbol{x})\notag\\
        \begin{split}\label{eq:p3}
            =& \frac{\lambda^3}{N}\sum_j w(j,\boldsymbol{j}) + \frac{\lambda^2(1-\lambda)}{N(N-1)}\sum_j \sum_{\boldsymbol{x}\in X^j_2} w(j,\boldsymbol{x})+ \frac{\lambda(1-\lambda)^2}{N(N-1)^2} \sum_j\left(\sum_{\boldsymbol{x}\in Y^j_2} w(j,\boldsymbol{x})+\sum_{\boldsymbol{x}\in Y^j_{1,1}} w(j,\boldsymbol{x})\right)\\
            &+ \frac{(1-\lambda)^3}{N(N-1)^3} \sum_j \left(\sum_{\boldsymbol{x}\in Z^j_3} w(j,\boldsymbol{x})+\sum_{\boldsymbol{x}\in Z^j_{2,1}} w(j,\boldsymbol{x})+\sum_{\boldsymbol{x}\in Z^j_{1,1,1}} w(j,\boldsymbol{x})\right).
        \end{split}
    \end{align}
By noticing that $Z^j_3 = \{\boldsymbol{k}\ | \ k\neq j\}$, we can see that $\sum_{\boldsymbol{x}\in Z^j_3} w(j,\boldsymbol{x}) = \sum_k w(j,\boldsymbol{k}) - w(j,\boldsymbol{j})$, and hence
        \begin{equation}\label{eq:Z3}
            \sum_j \sum_{\boldsymbol{x}\in Z^j_3} w(j,\boldsymbol{x}) = \sum_j \sum_k w(j,\boldsymbol{k}) - \sum_j w(j,\boldsymbol{j}) = N- \sum_j w(j,\boldsymbol{j}),
        \end{equation}
        where we again utilise the normalisation of the kernel: $\sum_j \sum_k w(j,\boldsymbol{k}) = \sum_k \big(\sum_j w(j,\boldsymbol{k})\big) = N$.
        If we now denote by $S_2 = \{\boldsymbol{x} \ | \ \exists\  j, m(\boldsymbol{x},j)=2 \}$ the set of vectors containing an element of multiplicity two, then we find that for each $j$, $\sum_{\boldsymbol{x}\in Z^j_{2,1}} w(j,\boldsymbol{x}) = \sum_{\boldsymbol{x}\in S_2} w(j,\boldsymbol{x}) - \sum_{\boldsymbol{x}\in X^j_2} w(j,\boldsymbol{x}) - \sum_{\boldsymbol{x}\in Y^j_{2}} w(j,\boldsymbol{x})$. 
        In a similar way to Equation \eqref{eq:Z3} we find that
        \begin{equation}\label{eq:Z21}
            \begin{split}
                \sum_j\sum_{\boldsymbol{x}\in Z^j_{2,1}} w(j,\boldsymbol{x}) &= \sum_j \sum_{\boldsymbol{x}\in S_2} w(j,\boldsymbol{x}) - \sum_j\sum_{\boldsymbol{x}\in X^j_2} w(j,\boldsymbol{x}) - \sum_j\sum_{\boldsymbol{x}\in Y^j_{2}} w(j,\boldsymbol{x})\\
                &= 3N(N-1) - \sum_j\sum_{\boldsymbol{x}\in X^j_2} w(j,\boldsymbol{x}) - \sum_j\sum_{\boldsymbol{x}\in Y^j_{2}} w(j,\boldsymbol{x}),
            \end{split}
        \end{equation}
        where $3N(N-1)=\abs{S_2}$.
        Finally, we let $S_{1} = \{(i,j,k)\ | \ i\neq j\neq k\neq i\}$ be the set of vectors where each component has multiplicity one. 
        From this we have that $\sum_{\boldsymbol{x}\in Z^j_{1,1,1}} w(j,\boldsymbol{x}) = \sum_{\boldsymbol{x}\in S_{1}} w(j,\boldsymbol{x}) -  \sum_{\boldsymbol{x}\in Y^j_{1,1}} w(j,\boldsymbol{x})$ for each $j$, and so
        \begin{equation}\label{eq:Z111}
            \begin{split}
                \sum_j\sum_{\boldsymbol{x}\in Z^j_{1,1,1}} w(j,\boldsymbol{x}) &= \sum_j \sum_{\boldsymbol{x}\in S_1} w(j,\boldsymbol{x}) -  \sum_j\sum_{\boldsymbol{x}\in Y^j_{1,1}} w(j,\boldsymbol{x})\\
                &= N(N-1)(N-2) - \sum_j\sum_{\boldsymbol{x}\in Y^j_{1,1}} w(j,\boldsymbol{x}),
            \end{split}
        \end{equation}
        where $N(N-1)(N-2)=\abs{S_1}$.
        Inserting \Crefrange{eq:Z3}{eq:Z111} into Equation \eqref{eq:p3} we arrive at
        \begin{equation}
            \begin{split}\label{eq:P3}
                P^{(3)}_\mathrm{succ}
                =&\frac{(1-\lambda)^3}{(N-1)^3} + \frac{3(1-\lambda)^3}{(N-1)^2} + \frac{(N-2)(1-\lambda)^3}{(N-1)^2}  + \left(\frac{\lambda^3}{N} -\frac{(1-\lambda)^3}{N(N-1)^3}\right)\sum_j w(j,\boldsymbol{j})\\
                &+ \left(\frac{\lambda^2(1-\lambda)}{N(N-1)} - \frac{(1-\lambda)^3}{N(N-1)^3} \right)\sum_j \sum_{\boldsymbol{x}\in X^j_2} w(j,\boldsymbol{x})\\
                &+\left(\frac{\lambda(1-\lambda)^2}{N(N-1)^2} - \frac{(1-\lambda)^3}{N(N-1)^3}\right) \sum_j\left(\sum_{\boldsymbol{x}\in Y^j_2} w(j,\boldsymbol{x})+\sum_{\boldsymbol{x}\in Y^j_{1,1}} w(j,\boldsymbol{x})\right).
            \end{split}
        \end{equation}
        Since the $\lambda\geq 1/N$, it follows that $(1-\lambda)/(N-1)\leq 1/N\leq \lambda$, hence every factor in front of a summation in Equation \eqref{eq:P3} is positive, which suggests that each such kernel value in the summations should be equal to one.
        However, this would lead to certain outcome arrays being counted more than once: for a given array $(i,j,j)$, say, one could clearly let $w(i,(i,j,j))=1$ since $(i,j,j)\in Y^i_2$, or $w(j,(i,j,j))=1$ since $(i,j,j)\in X^j_2$, but not both because $w(i,(i,j,j))+w(j,(i,j,j))\leq 1$. Because the elements in $X^j_2$ provide a larger contribution, and $\cup_j X^j_2 = \cup_j Y^j_2$, we set $w(j,\boldsymbol{x})=1$ for $\boldsymbol{x}\in X^j_2$ and $w(j,\boldsymbol{x})=0$ for $\boldsymbol{x}\in Y^j_2$.
Similarly, for a given outcome array $(i,j,k)$ one has its inclusion three times since $(i,j,k)\in Y^i_{1,1}\cap Y^j_{1,1}\cap Y^k_{1,1}$.
        As such, the summation over the $Y^j_{1,1}$ subsets needs to be divided by three (this may be alternatively seen as randomly assigning the outcome array $(i,j,k)$ to $\Bo^{(3)}(i)$, $\Bo^{(3)}(j)$ or $\Bo^{(3)}(k)$).

        Combining these results, Equation \eqref{eq:P3} reduces to 
        \begin{equation}
            \begin{split}
                P^{(3)}_\mathrm{succ}
                =&\frac{(1-\lambda)^3}{(N-1)^3} + \frac{3(1-\lambda)^3}{(N-1)^2} + \frac{(N-2)(1-\lambda)^3}{(N-1)^2}  + \left(\frac{\lambda^3}{N} -\frac{(1-\lambda)^3}{N(N-1)^3}\right)\abs{\bigcup_j X^j_3}\\
                &+ \left(\frac{\lambda^2(1-\lambda)}{N(N-1)} - \frac{(1-\lambda)^3}{N(N-1)^3} \right)\abs{\bigcup_j X^j_2} +\frac{1}{3}\left(\frac{\lambda(1-\lambda)^2}{N(N-1)^2} - \frac{(1-\lambda)^3}{N(N-1)^3}\right) \abs{\bigcup_j Y^j_{1,1}}\\
                =&\frac{(1-\lambda)^3}{(N-1)^3} + \frac{3(1-\lambda)^3}{(N-1)^2} + \frac{(N-2)(1-\lambda)^3}{(N-1)^2}  + \left(\lambda^3 -\frac{(1-\lambda)^3}{(N-1)^3}\right)\\
                &+ 3\left( \lambda^2(1-\lambda) - \frac{(1-\lambda)^3}{(N-1)^2} \right) +(N-2)\left(\frac{\lambda(1-\lambda)^2}{N-1} - \frac{(1-\lambda)^3}{(N-1)^2}\right)\\
                =&  \lambda^3 + 3 \lambda^2(1-\lambda) +(N-2)\frac{\lambda(1-\lambda)^2}{N-1} 
                = \frac{1}{N-1}\lambda\big((N-2)+(N+1)\lambda - N\lambda^2\big).
            \end{split}
        \end{equation}
\end{proof}

    For the values of $\lambda$ considered, namely $\frac{1}{N}< \lambda < 1$, the probability $P_\mathrm{succ}^{(3)}$ is strictly larger than $P_\mathrm{succ}^{(1)}$ and $P_\mathrm{succ}^{(4)}$ strictly larger than $P_\mathrm{succ}^{(3)}$, as shown for the case of $N=10$ in Figure \ref{fig:naryprob}.
    This is in contrast to the binary case, where they coincide.

\begin{figure}[t!]
    \centering
    \includegraphics[width=0.5\textwidth]{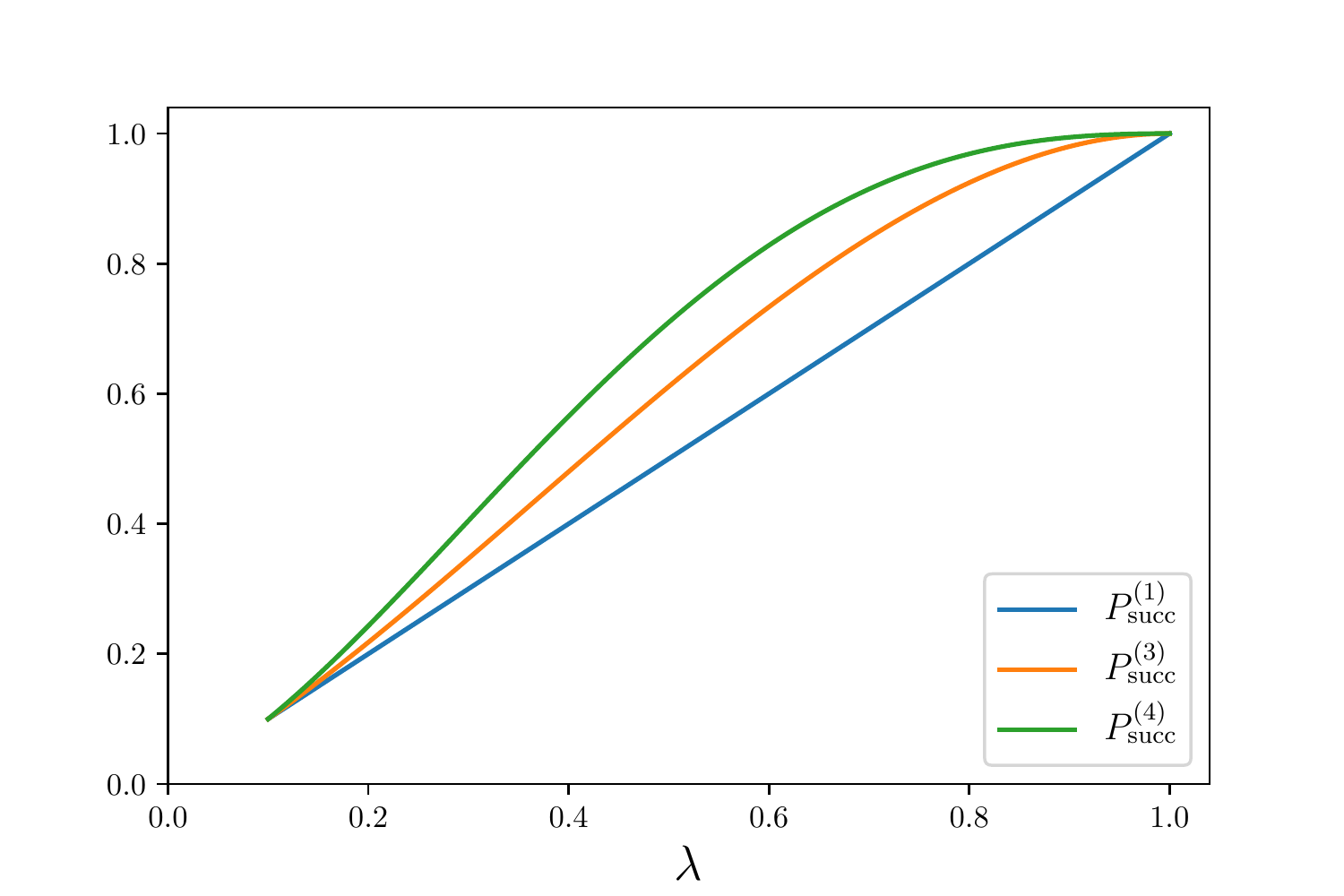}
    \caption{\label{fig:naryprob} A comparison of $P_\mathrm{succ}^{(1)}$, $P_\mathrm{succ}^{(3)}$ and $P_\mathrm{succ}^{(4)}$ for an $N=10$-outcome observable of the form given in the section, acting on a Hilbert space of arbitrary dimension $d\geq 10$.
    For any $N$-ary observable of this form, and for $\lambda \in [\frac{1}{N},1]$, the success probability of distinguishing the $N$ states after three measurements is greater than when a single measurement is performed and, contrary to the binary case, the success probability after 4 measurements is greater than after 3.}    
\end{figure}

\section{Conclusions}\label{sec:conc}

This paper has considered the task of minimum error state discrimination for a set of mutually orthogonal states, under the restriction of using one unsharp observable however many times as desired.
In the case of binary observables distinguishing two states we encountered a `rule of three', where by performing the measurement twice would provide no further advantage over a single instance, but a third measurement led to an increase for all unsharpness values $\lambda\in(\tfrac{1}{2},1)$.
As the number of repetitions increased, it was shown that the success probability would only improve when an odd number of steps were performed, which may be seen as an overcoming of ambiguous results that are present in the even cases.
While this shows that there exists benefits in performing repeated measurements, one must be cautious in the number of iterations performed, as an even number proves redundant compared to odd.

In the case of commutative $N$-valued observables with $N>2$, the rule of three was still shown to hold, but the step increase in success probability found in binary observables was no longer present. 
Since there are more possible outcomes we should not be surprised at such a result, as such ambiguity will not rise at the same points.
However, as Equation \eqref{eq:P3} shows, we do not suddenly see steps occurring every three iterations for trinary observables.
This suggests that we need not be as cautious about accidentally performing a redundant additional measurement, as in the binary case, but as is seen in Figure \ref{fig:naryprob}, the increase for $N>2$ between steps is perhaps less dramatic. 

\section*{Acknowledgements}

The authors acknowledge financial support from the Academy of Finland via the Centre of Excellence program (Project No. 312058) as well as Project No. 287750. T.B. acknowledges financial support from the Turku Collegium for Science and Medicine.


\begin{thebibliography}{10}

\bibitem{QDET76}
C.W. Helstrom.
\newblock {\em Quantum Detection and Estimation Theory}.
\newblock Academic Press, New York, 1976.

\bibitem{Holevo73}
A.S. Holevo.
\newblock Statistical decision theory for quantum systems.
\newblock {\em J. Multivariate Anal.}, 3:337--394, 1973.

\bibitem{PSAQT82}
A.S. Holevo.
\newblock {\em Probabilistic and Statistical Aspects of Quantum Theory}.
\newblock North-Holland Publishing Co., Amsterdam, 1982.

\bibitem{YuKeLa75}
H.P. Yuen, R.S. Kennedy, and M.~Lax.
\newblock Optimum testing of multiple hypotheses in quantum detection theory.
\newblock {\em IEEE Trans. Inform. Theory}, IT-21:125--134, 1975.

\bibitem{Ivanovic87}
I.~D. Ivanovi{\'c}.
\newblock How to differentiate between nonorthogonal states.
\newblock {\em Phys. Lett. A}, 123:257--259, 1987.

\bibitem{Dieks88}
D.~Dieks.
\newblock Overlap and distinguishability of quantum states.
\newblock {\em Phys. Lett. A}, 126:303--306, 1988.

\bibitem{Peres88}
A.~Peres.
\newblock How to differentiate between nonorthogonal states.
\newblock {\em Phys. Lett. A}, 128:19, 1988.

\bibitem{KoWaElRa04}
R.L. Kosut, I.~Walmsley, Y.~Eldar, and H.~Rabitz.
\newblock Quantum state detector design: Optimal worst-case a posteriori
  performance.
\newblock arXiv:quant-ph/0403150, 2004.

\bibitem{CrAnBaGiJe06}
S.~Croke, E.~Andersson, S.M. Barnett, C.R. Gilson, and J.~Jeffers.
\newblock Maximum confidence quantum measurements.
\newblock {\em Phys. Rev. Lett.}, 96:070401, 2006.

\bibitem{BaCr09}
S.M. Barnett and S.~Croke.
\newblock Quantum state discrimination.
\newblock {\em Adv. Opt. Photon.}, 1:238, 2009.

\bibitem{Bergou10}
J.A. Bergou.
\newblock Discrimination of quantum states.
\newblock {\em J. Mod. Opt.}, 57:160--180, 2010.

\bibitem{BaKw15}
J.~Bae and L.-C. Kwek.
\newblock Quantum state discrimination and its applications.
\newblock {\em J. Phys. A: Math. Theor.}, 48:083001, 2015.

\bibitem{GuFrHu20}
Y.~Guryanova, N.~Friis, and M.~Huber.
\newblock Ideal projective measurements have infinite resource costs.
\newblock {\em Quantum}, 4:222, 2020.

\bibitem{HaHeKu16}
E.~Haapasalo, T.~Heinosaari, and Y.~Kuramochi.
\newblock Saturation of repeated quantum measurements.
\newblock {\em J. Phys. A: Math. Theor.}, 49:33LT01, 2016.

\bibitem{BeFeHi13}
J.~Bergou, E.~Feldman, and M.~Hillery.
\newblock Extracting information from a qubit by multiple observers: Toward a
  theory of sequential state discrimination.
\newblock {\em Phys. Rev. Lett.}, 111:100501, 2013.

\bibitem{PaZhXuLiCh13}
C.-Q. Pang, F.-L. Zhang, L.-F. Xu, M.-L. Liang, and J.-L. Chen.
\newblock Sequential state discrimination and requirement of quantum
  dissonance.
\newblock {\em Phys. Rev. A}, 88:052331, 2013.

\bibitem{FiHaHiBe20}
D.~Fields, R.~Han, M.~Hillery, and J.A. Bergou.
\newblock Extracting unambiguous information from a single qubit by sequential
  observers.
\newblock {\em Phys. Rev. A}, 101:012118, 2020.

\bibitem{RoPaMaGi17}
M.~Rosati, G.~De Palma, A.~Mari, and V.~Giovannetti.
\newblock Optimal quantum state discrimination via nested binary measurements.
\newblock {\em Phys. Rev. A}, 95:042307, 2017.

\bibitem{CrBaWe17}
S.~Croke, S.M. Barnett, and G.~Weir.
\newblock Optimal sequential measurements for bipartite state discrimination.
\newblock {\em Phys. Rev. A}, 95:052308, 2017.

\bibitem{AnOi08}
E.~Andersson and D.K.L. Oi.
\newblock Binary search trees for generalized measurements.
\newblock {\em Phys. Rev. A}, 77:052104, 2008.

\bibitem{QM16}
P.~Busch, P.~Lahti, J.-P. Pellonp\"a\"a, and K.~Ylinen.
\newblock {\em Quantum {M}easurement}.
\newblock Springer, 2016.

\bibitem{Davis61}
A.S. Davis.
\newblock Markov chains as random input automata.
\newblock {\em The American Mathematical Monthly}, 68:264, 1961.

\bibitem{MaMu90a}
H.~Martens and W.M. {de Muynck}.
\newblock Nonideal quantum measurements.
\newblock {\em Found. Phys.}, 20:255--281, 1990.

\end{thebibliography}
\end{document}